\documentclass[12pt]{article}
\usepackage{epsfig}
\usepackage{graphicx}
\usepackage{makeidx}
\usepackage{multicol}
\usepackage{subeqn}
\usepackage{times}
\usepackage{amssymb}  

\usepackage{crop}
\makeindex

\newtheorem{definition}{Definition}
\newtheorem{lemma}{Lemma}
\newtheorem{theorem}{Theorem}
\newtheorem{proposition}{Proposition}
\newtheorem{corollary}{Corollary}

\def\endproof{$\blacksquare$\bigskip}

\setlength{\textwidth}{150truemm}
\setlength{\oddsidemargin}{6truemm}
\setlength{\evensidemargin}{1.6mm}
\setlength{\textheight}{220truemm}
\setlength{\topmargin}{-9.3truemm}


%
%

\def\coeff{{\rm coeff}}

\def\<{\leqslant}           
\def\>{\geqslant}           

\def\d{\partial}
\def\wh{\widehat}
\def\wt{\widetilde}

\def\Re{{\rm Re}}   
\def\Im{{\rm Im}}   

\def\cH{{\cal H}}   
\def\mA{{\mathbb A}}    
\def\mR{{\mathbb R}}    

\def\Tr{{\rm Tr}}       
\def\rT{{\rm T}}        
\def\diam{\diamond}       

\def\bE{{\bf E}}    


\def\[[[{[\![\![}
\def\]]]{]\!]\!]}

\def\bra{{\langle}}
\def\ket{{\rangle}}


\def\re{{\rm e}}        
\def\rd{{\rm d}}        

\def\fa{{\mathfrak a}}


\def\bJ{{\bf J}}

\def\br{{\bf r}}
\def\x{\times}
\def\od{\odot}
\def\ox{\otimes}

\def\bone{{\bf 1}}

\def\cI{{\mathcal I}}
\def\cP{{\mathcal P}}

\def\cT{{\mathcal T}}

\def\mS{{\mathbb S}}

\def\eps{\epsilon}
\def\Ups{\Upsilon}

\def\rprod{\mathop{\prod\!\!\!\!\!\!\!\!\!\!\longrightarrow}}

\def\diag{\mathop{\rm diag}}    

\begin{document}
\title{\bf Gaussian Stochastic  Linearization for Open Quantum Systems
Using Quadratic Approximation of Hamiltonians\footnote{This work is supported by the Australian Research Council.}}
\author{
    Igor G. Vladimirov$^{\dagger}$, \qquad Ian R. Petersen%
    \thanks{
        School of Engineering and Information Technology,
        University of New South Wales at the Australian Defence Force Academy,
        Canberra, ACT 2600,
        E-mail: {\small\tt igor.g.vladimirov@gmail.com, i.r.petersen@gmail.com}.
        }
}

\date{}

\maketitle

%

\paragraph{Abstract}
{\small This paper extends the energy-based version of the stochastic linearization method, known for classical nonlinear systems, to open quantum  systems with canonically commuting dynamic variables governed by quantum stochastic differential equations with non-quadratic Hamiltonians. The linearization proceeds by approximating the actual Hamiltonian of the quantum system by a quadratic function of its observables which corresponds to the Hamiltonian of a quantum harmonic oscillator. This approximation is carried out in a mean square optimal sense with respect to a Gaussian reference quantum state and leads to a self-consistent linearization procedure where the mean vector and quantum covariance matrix of the system observables evolve in time according to the effective linear dynamics.  We demonstrate the proposed Hamiltonian-based Gaussian linearization
for the quantum Duffing oscillator whose Hamiltonian is a quadro-quartic polynomial of the momentum and position operators. The results of the paper are applicable to the design of suboptimal controllers and filters  for nonlinear quantum systems.}

\thispagestyle{empty}

\section{Introduction}

A wide class of models for open quantum systems \cite{BP_2006,GZ_2004}, that is, quantum-mechanical objects interacting with the environment, is provided by dynamical systems whose state variables
are canonically commuting self-adjoint operators on a  Hilbert space. In  the Heisenberg picture, these system observables evolve in time according to quantum stochastic differential equations (QSDEs) \cite{P_1992}. Such QSDEs, which are dual to quantum master equations for density operators in the Schr\"{o}dinger picture  \cite[Chapter 3]{BP_2006}, are driven by a quantum Wiener process to take into account the coupling between the  environment (regarded as a memoryless heat bath of quantum harmonic oscillators) and the internal  dynamics which the system would have in isolation from the surroundings.
These internal dynamics are completely specified by the system Hamiltonian, which is a self-adjoint operator on the underlying Hilbert space,  usually representable as a function of the system observables.

In particular, quadratic system Hamiltonians correspond to quantum harmonic oscillators whose behaviour lends itself to complete analysis due to linearity of the resulting QSDEs in contrast to the general nonlinear case. Linear open quantum systems are being actively researched to develop  quantum analogues of classical control schemes, including the $\cH_{\infty}$, risk-sensitive and linear quadratic Gaussian control approaches (see, for example, \cite{EB_2005,J_2004,JNP_2008,MP_2010,NJP_2009,VP_2011a,VP_2011b} and references therein). Such models are also employed in quantum optics which is considered to be one of  possible platforms for implementing the quantum computer \cite[Section 7.4]{NC_2000}.

The present paper is aimed at a quantum-mechanical version of the stochastic linearization (SL) technique whose origins date back to \cite{B_1953,C_1963,K_1956} (see also \cite{BW_1998,C_2006,R_2007} and references therein). SL is concerned with a classical SDE whose drift term is a nonlinear function of the state vector.
%
%
The principal idea of SL is to approximate the drift by an affine function of the state variables whose coefficients are computed using a mean square criterion with respect to a probability distribution.
 This reference distribution,  which is intended to mimic the actual probability distribution of the state vector, is usually chosen to be Gaussian, although non-Gaussian approximations (such as, for example,  in \cite{C_2006,R_2007}) are also utilized. The Gaussian reference measure leads to an effective linear SDE which approximates the actual nonlinear dynamics. A salient feature of this SDE is that its coefficients depend nonlinearly (through integral operators with Gaussian kernels) on the mean value  and  covariance matrix of the state vector, which are in turn governed by  linear ordinary differential equations (ODEs) (including the Lyapunov ODE for the covariance matrix) involving those coefficients.\footnote{This resembles the McKean-Vlasov SDE (from the kinetic theory of plasma) whose drift depends on the probability density function of the state vector propagated by the Fokker-Planck-Kolmogorov equation associated with the SDE, thus leading to a nonlinear parabolic partial differential equation  \cite{M_1966}.}  This provides a self-consistent procedure for linearizing  the dynamics.

An alternative energy-based version \cite{ZEZ_1990}  of the SL technique was aimed originally at structural engineering problems of random vibrations with a \emph{potential} nonlinear restoring force.  Rather than directly linearizing the nonlinearity, this approach employs a mean square criterion in order to approximate the force potential by a quadratic function of the displacement vector (corresponding to an ideal spring). It is this variant of the classical SL that is particularly suitable for our purposes.  We adapt it to the quantum-mechanical  setting by solving the problem of minimizing the mean square deviation between the actual non-quadratic system Hamiltonian of the quantum  system and a general quadratic function of its observables. The  solution involves the second and higher-order  mixed moments, which, in  a Gaussian quantum state \cite{P_2010} (see also \cite[pp. 118--122]{GZ_2004}), are completely specified by the mean vector and the quantum covariance matrix of the system observables through Wick's theorem \cite[p.~122]{M_2005}.

For a class of open quantum systems,  whose coupling with the external heat bath variables in the total Hamiltonian is bilinear,  the quadratic approximation of the system Hamiltonian leads to a linear QSDE of an open quantum harmonic oscillator which is amenable to comprehensive analysis. In particular, similarly to the classical linear systems \cite{KS_1972}, the quantum covariance matrix of the observables of this effective oscillator satisfies a Lyapunov ODE. Moreover, such linearization respects the physical realizability (PR) conditions \cite[Theorem 3.4 on p. 1790]{JNP_2008}, which makes it suitable for coherent quantum control \cite{NJP_2009}. We demonstrate the approach for the quantum Duffing oscillator \cite{C_1988,PT_2006} with a quadro-quartic Hamiltonian. The proposed Hamiltonian-based quantum Gaussian linearization technique is applicable to the development of suboptimal controllers and filters for nonlinear quantum systems since it offers a recipe to deal with the ``curse of dimensionality'' of the information state, similar to projective quantum filtering \cite{VM_2005}.

The paper is organised as follows. Section~\ref{sec:quad_ham} specifies the class of quadratic Hamiltonians. Section~\ref{sec:quad_approx} describes the approximation of an arbitrary Hamiltonian by a quadratic Hamiltonian, optimal in the mean square sense. Section~\ref{sec:Gauss} specializes the computations for a Gaussian quantum state. Section~\ref{sec:self} describes the Hamiltonian-based self-consistent  linearization for open quantum systems. Section \ref{sec:Duffing} demonstrates the linearization procedure for the quantum Duffing oscillator. Section~\ref{sec:conclusion} provides concluding remarks. Long proofs and subsidiary material are given in Appendices.

%

\section{Quadratic Hamiltonians in canonically commuting variables}\label{sec:quad_ham}

Suppose $x_1, \ldots, x_n$ are quantum observables (that is, self-adjoint operators on an underlying separable Hilbert space  $\cH$ with an inner product $\bra\varphi \mid \psi \ket$)\footnote{To avoid confusion, we use a different notation $\bra \cdot, \cdot \ket$  for other inner products, for example, the Frobenius inner product of matrices.}  which satisfy canonical commutation relations (CCRs)
\begin{equation}
\label{theta}
    [x_j, x_k] = i
    \theta_{jk}\cI,
    \qquad
    1\< j,k\< n.
\end{equation}
Here, $i:= \sqrt{-1}$ is the imaginary unit,
$[A, B]:= AB-BA$ is the commutator of operators,   and $\Theta:= (\theta_{jk})_{1\<j, k\< n}$ is a real antisymmetric CCR matrix of order $n$ (the space of such matrices is denoted by $\mA_n$). Also, $\cI$ denotes the identity operator which carries out the ampliation  of entries of the matrix $\Theta$ to the space of linear operators on $\cH$ and will be omitted for brevity, so that  (\ref{theta}) can be written in a vector-matrix form as
\begin{equation}
\label{Theta}
    [x,x^{\rT}]
    :=
    \big([x_j,x_k]\big)_{1\< j,k\< n}
    =
    i
    \Theta,
\end{equation}
where the observables are assembled into a vector $x:= (x_j)_{1\< j\< n}$. Unless indicated otherwise, vectors are organised as columns. The transpose $(\cdot)^{\rT}$ applies to vectors and matrices with operator-valued entries as if the latter were scalars. In particular, the CCRs hold for self-adjoint operators which are representable as linear combinations of annihilation and creation operators $\fa_1, \ldots, \fa_{\nu}$ and $\fa_1^{\dagger}, \ldots, \fa_{\nu}^{\dagger}$, where $\nu := n/2$ and $n$ is assumed to be even. Such are, for example, the quantum-mechanical position and momentum operators $q$ and $p:= -i
\d_q$ with $[q,p] = i
$ and CCR matrix
\begin{equation}
\label{bJ}
    \bJ
    :=
    {\small\left[\begin{array}{rc}
        0& 1\\
        -1 & 0
    \end{array}\right]},
\end{equation}
which spans the space $\mA_2$. 
The associated annihilation and creation operators \cite[pp. 90--91]{S_1994}
\begin{equation}
\label{faa}
    \fa
    :=
    (q+ip)/\sqrt{2},
    \qquad
    \fa^{\dagger}
    :=
    (q-ip)/\sqrt{2}
\end{equation}
satisfy $[\fa, \fa^{\dagger}] = 1$.
Now, with a scalar $a \in \mR$, a vector $b:= (b_j)_{1\< j \< n} \in \mR^n$ and a  real symmetric matrix $R:= (r_{jk})_{1\< j,k\< n}$ of order $n$ (the space of such matrices is denoted by $\mS_n$), we associate a self-adjoint operator
\begin{equation}
\label{HabR}
    H_{a,b,R}
     :=
    a
    +
    b^{\rT}x
    +
    x^{\rT}Rx\big/2
     =
    a
    +
    \sum_{j=1}^n
    \Big(
        b_k
        +
        \frac{1}{2}
        \sum_{j=1}^n
        r_{jk}x_j
    \Big)
    x_k
\end{equation}
on the Hilbert space $\cH$. The operator $H_{a,b,R}$, which is parameterized linearly  by the triple $(a, b, R) \in \mR\x \mR^n \x\mS_n$, is the Hamiltonian of a quantum harmonic oscillator with state variables $x_1, \ldots, x_n$. Although the constant term $a$ in (\ref{HabR}) has no influence on the system dynamics, it is retained to preserve the generality of $H_{a,b,R}$ as a quadratic polynomial of the system observables with real coefficients. If the system is isolated from the environment, its Heisenberg dynamics are described by the ODEs
\begin{eqnarray}
\nonumber
    \dot{x}_{\ell}
    & = &
    i[H_{a,b,R}, x_{\ell}]
     =
    i
    \Big(
        \sum_{j=1}^n
        b_j
        [x_j,x_{\ell}]
        +
        \frac{1}{2}
        \sum_{j,k=1}^n
        r_{jk}
        [x_jx_k, x_{\ell}]
    \Big)\\
\label{xelldot}
&=&
    -\sum_{j=1}^n
    b_j \theta_{j\ell}
    -\frac{1}{2}
    \sum_{j,k=1}^n
    r_{jk}
    (\theta_{k\ell}x_j+\theta_{j\ell}x_k)
=
    \sum_{j=1}^n
    \theta_{\ell j}
    \Big(
        b_j
        +
        \sum_{k=1}^{n}
        r_{jk}
        x_k
    \Big),
\end{eqnarray}
where the commutator identity $    [AB, C]
    =
    A [B, C] + [A, C]B
$
from \cite[Eq. (3.50) on p. 38]{M_1998}
is applied to the triple $x_j$, $x_k$, $x_{\ell}$ and use is made of the CCRs from (\ref{theta}) along with the antisymmetry of $\Theta$. In a vector-matrix form, (\ref{xelldot}) can be written as
\begin{equation}
\label{xdot}
    \dot{x}
    =
    i[H_{a,b,R}, x]
    =
    \Theta (b + Rx).
\end{equation}
If $R \succ 0$, the spectrum of the matrix $\Theta R$ is purely imaginary, and the  system is neutrally stable. In general, $R$ is not necessarily positive definite. For example, quantum amplifiers \cite{GZ_2004}, used as active elements in quantum optics, are modelled as inverted oscillators with $R\prec 0$. If $R$ is nonsingular, the effect of $b$ reduces to a constant shift $R^{-1}b$ in $x$, so that (\ref{xdot}) can be written in terms of $y:= x + R^{-1}b$ as $\dot{y} = \Theta R y$. If the system has a non-quadratic Hamiltonian $H$, then, in contrast to  (\ref{xdot}), the right-hand side $i[H,x] := (i[H,x_j])_{1\< j\< n}$ of the Heisenberg dynamics is not affine in the system observables. In this case, $i[H,x]$ can, in principle,  be approximated by $\Theta(b+ R x)$ so as to minimize a mean square deviation between the vectors:
\begin{equation}
\label{ELS}
    \bE(
        \Delta_{b,R}^{\rT} F \Delta_{b,R}
    )
    \longrightarrow
    \min,
    \qquad
    b \in \mR^n,\
    R\in \mS_n.
\end{equation}
Here, $\bE A := \Tr(\rho A)$ denotes the quantum expectation of a linear operator $A$ on the underlying Hilbert space $\cH$ with respect to a density operator $\rho$ (a positive semi-definite self-adjoint operator  on $\cH$ with unit trace $\Tr \rho =1$) which specifies the quantum state \cite[p. 51]{P_1992}. 
Also,
\begin{equation}
\label{Delta}
    \Delta_{b,R}
    :=
    i
    [H,x] - \Theta(b + R x)
\end{equation}
is a vector of ``residuals'' which depends affinely on $b$ and $R$, and $F$ is a complex positive definite Hermitian matrix of order $n$. Such linearization of $i[H,x]$ can be regarded as a quantum version of the weighted least squares from the classical linear regression analysis \cite{RTSH_2010}. The weight matrix $F$, which governs the minimization problem (\ref{ELS})--(\ref{Delta}), influences the optimal values of $b$ and $R$, and its particular choice requires additional consideration. 
We will therefore take a different approach to linearizing the system dynamics, through a quadratic approximation of the Hamiltonian itself, as a quantum counterpart to the energy-based variant of SL given in \cite{ZEZ_1990}.

\section{Mean square optimal quadratic approximation of Hamiltonians}\label{sec:quad_approx}

Consider the mean square optimal approximation of the system Hamiltonian $H$ by a quadratic Hamiltonian $H_{a,b,R}$ from (\ref{HabR}):
\begin{equation}
\label{Q}
    Q(\alpha,\beta,R)
     :=
    \bE((H -H_{a,b,R})^2)    \\
     =
    \bE((\eta -h_{\alpha,\beta,R})^2)
    \longrightarrow \min.
\end{equation}
Here, we have introduced a different parameterization of the quadratic Hamiltonian
\begin{equation}
\label{h}
    H_{a,b,R}
    =
    \alpha + \beta^{\rT} \xi + \xi^{\rT} R\xi/2
        =:
    h_{\alpha, \beta, R}
\end{equation}
through ``centering'' the observables of the system in the reference quantum state:
\begin{equation}
\label{xieta}
    \xi
    :=
    (\xi_j)_{1\< j \< n}
    :=
    x -\bE x,
    \qquad
    \eta
    :=
    H - \bE H.
\end{equation}
The new parameters     $\alpha
      :=
    a + b^{\rT} \bE x + (\bE x)^{\rT} R \bE x/2 - \bE H$
 and $b + R\bE x$ are bijectively related to the old ones $a$ and $b$ from (\ref{HabR}), with the matrix $R$ remaining the same.
To compute the optimal values of $\alpha$, $\beta$, $R$ which minimize $Q(\alpha, \beta, R)$, we will use the real parts of the following  mixed central moments of the actual Hamiltonian $H$ and the system observables $x_1, \ldots, x_n$:
\begin{equation}
\label{Hx_Hxx_xx}
    \eps_j
    :=
    \Re
    \bE(\eta\xi_j),
    \qquad
    \gamma_{jk}
    :=
    \Re\bE(\eta\xi_j\xi_k),
    \qquad
    \sigma_{jk}
     :=
    \Re \bE(\xi_j \xi_k),
\end{equation}
\begin{equation}
\label{xxx_xxxx}
    \tau_{jk\ell}
     :=
    \Re \bE(\xi_j\xi_k\xi_{\ell}),
    \qquad
    \varphi_{jk\ell m}
     :=
    \Re \bE(\xi_j\xi_k\xi_{\ell}\xi_m).
\end{equation}
Several remarks  are in order on the matrices $\Gamma := (\gamma_{jk})_{1\< j,k\< n}$ and $\Sigma := (\sigma_{jk})_{1\< j,k\< n}$ and the tensors $\cT:= (\tau_{jk\ell})_{1\< j,k,\ell\< n}$ and $\Phi:= (\varphi_{jk\ell m})_{1\< j,k,\ell,m\< n}$ defined by (\ref{Hx_Hxx_xx})--(\ref{xxx_xxxx}). Since the system observables satisfy CCRs,  then in view of Lemma~\ref{lem:three} from Appendix~\ref{sec:three}, the matrix $\Gamma  = \Re \bE (\eta \xi\xi^{\rT})$ is symmetric and the tensor $\cT$ is totally symmetric.
%
In what follows, $\cT$ is identified with a linear operator acting from $\mS_n$ to $\mR^n$ as
\begin{equation}
\label{cTmap}
    \cT(R)
    :=
    \Big(
        \sum_{k,\ell=1}^{n}
        \tau_{jk\ell}r_{k\ell}
    \Big)_{1\< j\< n},
    \qquad
    R:=(r_{k\ell})_{1\< k,\ell\< n}\in \mS_n.
\end{equation}
The matrix $\Sigma = \Re \bE(\xi\xi^{\rT})$ is symmetric and positive semi-definite as the real part of the quantum  covariance matrix of observables. However, the tensor $\Phi$ from (\ref{xxx_xxxx}) is only guaranteed to be symmetric with respect to  reversing the order of its subscripts. 
Indeed, similarly to (\ref{33}) from Appendix~\ref{sec:three},
$$
    \overline{\bE(\xi_j\xi_k\xi_{\ell}\xi_m)}
    =
    \overline{\Tr(\rho\xi_j\xi_k\xi_{\ell}\xi_m)}
    =
    \Tr (\xi_m\xi_{\ell}\xi_k\xi_j \rho)
     =
    \bE(\xi_m\xi_{\ell}\xi_k\xi_j),
$$
with $\overline{(\cdot)}$ the complex conjugate,
and hence, the quantum expectations on the opposite sides have equal real parts, that is, $\varphi_{jk\ell m} = \varphi_{m\ell k j}$. The fact that $\varphi_{jk\ell m}$ is, in general, not invariant even under transpositions of its neighbouring  subscripts follows from  the identities
\begin{eqnarray}
\nonumber
    \varphi_{jk\ell m}
    -
    \varphi_{kj\ell m}
     & = &
    \Re \bE([\xi_j,\xi_k]\xi_{\ell} \xi_m)
    =
    \Re(i\theta_{jk} \bE(\xi_{\ell} \xi_m))\\
\label{phitrans1}
     & = &
    \Re(i\theta_{jk} (\sigma_{\ell m} + i\theta_{\ell m}/2))
     =
    -\theta_{jk}\theta_{\ell m}/2
    =
    \varphi_{jk\ell m}
    -
    \varphi_{jkm\ell}
\end{eqnarray}
and a similar relationship
\begin{equation}
\label{phitrans2}
    \varphi_{jk\ell m}
    -
    \varphi_{j\ell km}
    =
    -\theta_{jm}\theta_{k\ell}/2,
\end{equation}
which are established by using the CCRs (\ref{theta}) and the definitions (\ref{Hx_Hxx_xx}), (\ref{xxx_xxxx}).
Now, consider a partial symmetrization $\Psi:=(\psi_{jk\ell m})_{1\< j,k,\ell,m\< n}$ of $\Phi$ whose entries are defined by
\begin{eqnarray}
\nonumber
    \psi_{jk\ell m}
     & := &
    (\varphi_{jk\ell m} + \varphi_{jkm\ell}+\varphi_{kj\ell m} + \varphi_{kjm\ell}\\
\label{Psi}
      & &+
    \varphi_{\ell m jk} + \varphi_{m\ell jk}+\varphi_{\ell m kj} + \varphi_{m\ell kj})/8,
\end{eqnarray}
which involves only eight of the 24 possible permutations of the subscripts $j$, $k$, $\ell$, $m$. We will identify $\Psi$ with a self-adjoint operator on the Hilbert space $\mS_n$ (with the Frobenius inner product of matrices $\bra X,Y\ket:= \Tr(XY)$ inherited from $\mR^{n\x n}$) defined by
\begin{equation}
\label{Psimap}
    \Psi(R)
    :=
    \Big(
    \sum_{\ell,m=1}^{n}
    \psi_{jk\ell m}
    r_{\ell m}
    \Big)_{1\< j,k\< n},
    \qquad
    R:=(r_{\ell m})_{1\< \ell, m\< n}\in \mS_n.
\end{equation}
The significance of $\Psi$ as the partial symmetrization of $\Phi$ (with the latter being regarded as a linear operator on $\mR^{n\x n}$, defined similarly) is that
\begin{equation}
\label{PhiPsi}
    \bE((\xi^{\rT} R \xi)^2)
    =
    \bra
        R, \Phi(R)
    \ket
    =
    \bra
        R,\Psi(R)
    \ket,
    \qquad
    R \in \mS_n,
\end{equation}
which, in fact, can be used as an equivalent definition of $\Psi$. Since the quantum expectation of a squared observable is always nonnegative, (\ref{PhiPsi}) implies that the operator $\Psi$ is positive semi-definite ($\Psi\succcurlyeq 0$).
By using the identity $\bE(AB) =\overline{\bE(BA) }$ for observables $A$, $B$,
it follows that the mean square criterion  (\ref{Q}) is a convex quadratic function:
\begin{equation}
\label{Q1}
    Q(\alpha,\beta,R)
    =
    \bE(\eta^2)
    -
    2\Re \bE(\eta h_{\alpha,\beta,R})
    +
    \bE (h_{\alpha,\beta,R}^2).
\end{equation}
In view of (\ref{h})--(\ref{Hx_Hxx_xx}), the second term on the right-hand side of (\ref{Q1}) does not depend on $\alpha$ and is linear with respect to  $\beta$ and $R$:
\begin{equation}
\label{Q2}
    \Re \bE(\eta h_{\alpha,\beta,R})
    =
    \eps^{\rT}\beta
    +
    \bra
        \Gamma,
        R
    \ket/2.
\end{equation}
By a similar reasoning, (\ref{h})--(\ref{xxx_xxxx}) imply that the rightmost term in (\ref{Q1}) is a positive semi-definite quadratic form
\begin{eqnarray}
\nonumber
    \bE (h_{\alpha,\beta,R}^2)
    & = &
    \alpha^2
    +
    \alpha
    \bra \Sigma, R\ket
    +
    \beta^{\rT}\Sigma \beta\\
\label{Q3}
    & & +\beta^{\rT}\cT(R)
    +
    \frac{\bra R, \Psi(R)\ket}{4}
    =
    \bra
        \zeta, \Pi(\zeta)
    \ket,
    \qquad
    \zeta
    :=
    {\small\left[
    \begin{array}{c}
        \alpha\\
        \beta\\
        R
    \end{array}
    \right]},
\end{eqnarray}
which is specified by a self-adjoint operator $\Pi$
on the Hilbert space $\mR \x \mR^n\x \mS_n$ (with the inherited inner product $\bra \zeta, \zeta'\ket := \alpha\alpha' + \beta^{\rT}\beta' + \bra R, R'\ket$) as
\begin{equation}
\label{Pi}
    \Pi(\zeta)
     :=
    {\small\left[\begin{array}{c}
        \alpha  + \bra \Sigma, R \ket/2\\
        \Sigma \beta + \cT(R)/2\\
        \Sigma \alpha/2 + \cT^{\dagger}(\beta)/2 + \Psi(R)/4
    \end{array}\right]}
    =
    {\small\left[\begin{array}{ccc}
        1 & 0 & \bra \Sigma, \cdot\ket/2\\
        0 & \Sigma & \cT/2\\
        \Sigma /2 & \cT^{\dagger}/2 & \Psi/4
    \end{array}\right]}
    \zeta.
\end{equation}
Here, use is made of the operators $\cT$ and $\Psi$ from (\ref{cTmap}), (\ref{Psimap}), and the adjoint operator $\cT^{\dagger}: \mR^n\to \mS_n$ maps a vector $\beta:= (\beta_j)_{1\< j\< n}$ to a matrix $\cT^{\dagger}(\beta):= \big(\sum_{k,\ell=1}^n\tau_{jk\ell}\beta_j\big)_{1\< k, \ell\< n}$.
The  symbolic matrix representation of $\Pi$ in (\ref{Pi}) can be identified with the real part of the matrix of second moments of the triple $(1, \xi, \xi\xi^{\rT}/2)$.    By a generalized version of the Schur complement condition  of positive definiteness \cite[Theorems  7.7.6, 7.7.7 on pp. 472--474]{HJ_2007}, the invertibility of the positive semi-definite operator $\Pi$ is equivalent to $\Sigma \succ 0$ and $G\succ 0$, where $G$ is a self-adjoint operator on $\mS_n$ defined by
\begin{equation}
\label{G}
    G
    :=
    \Psi
    -
    {\small\left[\begin{array}{cc}
        \Sigma &
        \cT^{\dagger}
    \end{array}
    \right]}
    {\small\left[\begin{array}{cc}
        1 & 0\\
        0 & \Sigma
    \end{array}
    \right]}^{-1}
    {\small\left[\begin{array}{c}
        \bra
            \Sigma,
            \cdot
        \ket \\
        \cT
    \end{array}
    \right]}
    =
    \Psi - \Sigma\bra \Sigma, \cdot\ket - \cT^{\dagger}\Sigma^{-1} \cT.
\end{equation}
Up to a factor of 4, the operator $G$ is  the Schur complement of the block ${\scriptsize\left[\begin{array}{cc}1 & 0\\ 0 & \Sigma\end{array}\right]}$ in (\ref{Pi}).


\begin{theorem}
\label{th:bestquad}
Suppose the operator $\Pi$, defined by (\ref{Pi}), is invertible. Then
the optimal values of the parameters $\alpha$, $\beta$, $R$, which minimize the function $Q(\alpha, \beta, R)$ in (\ref{Q1}), are computed in terms of (\ref{Hx_Hxx_xx}), (\ref{xxx_xxxx}) and (\ref{G}) as
\begin{eqnarray}
\label{alphadiam}
    \alpha_{\diam}
    &=&
    -\bra\Sigma, R_{\diam}\ket/2, \\
\label{betadiam}
    \beta_{\diam}
    &=&
    \Sigma^{-1}
    (\eps-\cT(R_{\diam})/2),\\
\label{Rdiam}
    R_{\diam}
    &=&
    2G^{-1}
    (\Gamma-\cT^{\dagger}(\Sigma^{-1}\eps)).
\end{eqnarray}
\end{theorem}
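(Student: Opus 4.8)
The plan is to use the fact, already recorded in (\ref{Q1})--(\ref{Q3}), that $Q$ is a convex quadratic function of the aggregate variable $\zeta := [\alpha;\beta;R]$, and to minimize it by the first-order condition. Collecting the linear term from (\ref{Q2}) into the vector $w := [0;\eps;\Gamma/2]$ of the Hilbert space $\mR\x\mR^n\x\mS_n$ (which is legitimate because $\Gamma$ is symmetric, as noted after (\ref{xxx_xxxx})), one rewrites
\[
    Q(\alpha,\beta,R)
    =
    \bE(\eta^2)
    -
    2\bra w,\zeta\ket
    +
    \bra \zeta,\Pi(\zeta)\ket .
\]
Since $\Pi$ is self-adjoint and positive semi-definite and, by hypothesis, invertible, it is in fact positive definite; hence $Q$ is strictly convex and coercive, and its unique minimizer $\zeta_\diam$ is characterized by the stationarity (normal) equation $\Pi(\zeta_\diam) = w$, the gradient of $\bra\zeta,\Pi(\zeta)\ket$ being $2\Pi(\zeta)$ by self-adjointness of $\Pi$.

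First I would expand $\Pi(\zeta_\diam)=w$ blockwise using the symbolic matrix in (\ref{Pi}). The top block gives $\alpha_\diam + \bra\Sigma,R_\diam\ket/2 = 0$, which is (\ref{alphadiam}). The middle block gives $\Sigma\beta_\diam + \cT(R_\diam)/2 = \eps$; since invertibility of $\Pi$ forces $\Sigma\succ0$, this solves for $\beta_\diam$ as in (\ref{betadiam}). The bottom block reads $\Sigma\alpha_\diam/2 + \cT^{\dagger}(\beta_\diam)/2 + \Psi(R_\diam)/4 = \Gamma/2$.

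Next I would substitute the expressions for $\alpha_\diam$ and $\beta_\diam$ just found into this last equation --- this is precisely the Schur-complement elimination of the leading block $\mathrm{diag}(1,\Sigma)$ of (\ref{Pi}). After clearing denominators, the left-hand side becomes $\big(\Psi - \Sigma\bra\Sigma,\cdot\ket - \cT^{\dagger}\Sigma^{-1}\cT\big)(R_\diam)/4 + \cT^{\dagger}(\Sigma^{-1}\eps)/2$, and matching with $\Gamma/2$ gives $G(R_\diam) = 2\big(\Gamma - \cT^{\dagger}(\Sigma^{-1}\eps)\big)$ with $G$ exactly as in (\ref{G}). The generalized Schur complement criterion cited just before the theorem says that $\Pi$ being invertible is equivalent to $\Sigma\succ0$ together with $G\succ0$, so $G$ is invertible and (\ref{Rdiam}) follows; back-substitution of $R_\diam$ into the top and middle blocks reproduces (\ref{alphadiam}) and (\ref{betadiam}) in the stated form.

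The only step that demands care is the bookkeeping in this block elimination: keeping the scalar factors $1/2$ and $1/4$ and the adjoint $\cT^{\dagger}$ in their correct places so that the three terms collapse exactly to the operator $G$ of (\ref{G}), and correctly handling the rank-one operator $R\mapsto\bra\Sigma,R\ket\Sigma$. Everything else --- existence, uniqueness, and the fact that the stationary point is the global minimum rather than a saddle --- is immediate from $\Pi\succ0$.
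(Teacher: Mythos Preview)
Your proposal is correct and follows essentially the same approach as the paper: both set the first-order conditions to zero (the paper writes out $\partial_\alpha Q$, $\partial_\beta Q$, $\partial_R Q$ separately, you package them as the normal equation $\Pi(\zeta_\diam)=w$), use $\Pi\succ0$ for strict convexity and uniqueness, solve the first two blocks directly for $\alpha_\diam$, $\beta_\diam$, and then substitute into the third block to obtain $G(R_\diam)=2(\Gamma-\cT^{\dagger}(\Sigma^{-1}\eps))$ via the Schur-complement elimination. The only difference is cosmetic.
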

\begin{proof}
The optimal values of $\alpha$, $\beta$, $R$ are obtained by equating the Frechet derivatives of the function $Q(\alpha, \beta, R)$ in (\ref{Q1}) to zero. In view of (\ref{Q2}) and (\ref{Q3}), this leads to the system of linear equations
\begin{eqnarray}
\label{dalpha}
    \d_{\alpha} Q
    & = &
    2\alpha + \bra\Sigma, R\ket = 0, \\
\label{dbeta}
    \d_{\beta} Q
    & = &
    2\Sigma \beta + \cT(R)-2\eps
    =0,\\
\label{dR}
    \d_R Q
    & = &
    \Psi(R)/2 + \cT^{\dagger}(\beta) + \alpha \Sigma - \Gamma=0,
\end{eqnarray}
which correspond to the normal equations of the least squares method in the linear regression analysis \cite{RTSH_2010}. If the operator $\Pi\succcurlyeq 0$ in (\ref{Pi}) is invertible (and hence, $\Pi \succ 0$), then the quadratic function $Q$ in (\ref{Q1}) is strictly convex and the system of equations (\ref{dalpha})--(\ref{dR}) has a unique solution. Now, (\ref{alphadiam}) and (\ref{betadiam}) follow  directly from (\ref{dalpha}) and (\ref{dbeta}), while (\ref{Rdiam}) is established by their substitution into (\ref{dR}) and using the invertibility of the operator $G$ from (\ref{G}) which is secured by the condition $\Pi\succ 0$.
\end{proof}

Since the vector $\eps$, the matrices $\Gamma$ and $\Sigma$ and the tensors $\cT$ and $\Psi$, which are associated with the mixed central moments of the Hamiltonian $H$ and the system observables $x_1, \ldots, x_n$, depend on the quantum state, then so also do the parameters $\alpha_{\diam}$, $\beta_{\diam}$,   $R_{\diam}$ of the optimal quadratic approximation of the Hamiltonian.

%
%
%

\section{Approximating the Hamiltonian in a Gaussian quantum state}\label{sec:Gauss}

The system is said to be in a Gaussian quantum state \cite{P_2010}, if the quantum covariance function of the centered vector $\xi$ of system observables from (\ref{xieta}) is given by
\begin{equation}
\label{charfun}
    \bE \re^{i u^{\rT}\xi}
    =
    \re^{- u^{\rT} S u/2}
    =
    \re^{- u^{\rT} \Sigma u/2},\
    \qquad
    u 
    \in \mR^n.
\end{equation}
Here,
\begin{equation}
\label{S}
    S
    :=
    (s_{jk})_{1\< j, k \< n}
    :=
    \bE(\xi\xi^{\rT})
    =
    \Sigma + i \Theta/2,
\end{equation}
is the quantum covariance matrix, which is a complex positive semi-definite Hermitian matrix, with $\Theta$ and $\Sigma$ defined by (\ref{Theta}), (\ref{Hx_Hxx_xx}), and use is made of the antisymmetry of $\Theta$.
By applying Wick's theorem \cite[p.~122]{M_2005}, which is a quantum counterpart to Isserlis' theorem \cite{I_1918} on the mixed central moments of evenly many jointly Gaussian classical random variables in terms of their covariances (see also \cite[Theorem 1.28 on pp.~11--12]{J_1997}), it follows that, in the Gaussian quantum state,
\begin{equation}
\label{Wick}
    \bE(\xi_{j_1} \x \ldots \x \xi_{j_{2r}})
    =
    \sum
    \prod_{\ell = 1}^{r}
    s_{j_{k_{2\ell-1}} j_{k_{2\ell}}}.
\end{equation}
Here, the sum of products of the quantum covariances from (\ref{S}) extends over a class $\cP_r$ of $(2r-1)!!$ permutations $(k_1, \ldots, k_{2r})$ of the integers  $1, \ldots, 2r$ which satisfy $k_{2\ell - 1}< k_{2\ell}$ for every $1\< \ell  \< r$ and $k_1 < k_3 < \ldots < k_{2r-3} < k_{2r-1}$. Such permutations will be referred to as \emph{regular}. There is a one-to-one correspondence between the regular permutations and all
possible partitions $\{\{k_1, k_2\}, \ldots, \{k_{2r-1}, k_{2r}\}\}$
of the set $\{1, \ldots, 2r\}$ into two-element subsets.
Thus,  (\ref{Wick}) allows any mixed moment of even order $2r$ to be computed for the observables $\xi_1, \ldots, \xi_n$  in a Gaussian quantum state in terms of the matrix $S$, whereas all the moments of odd orders in such a state are zero; see Appendix~\ref{sec:Wick}.
In particular, the tensor $\cT$ of third order  moments from (\ref{xxx_xxxx}) vanishes, while application of (\ref{Wick}) to the fourth order mixed moments yields
\begin{equation}
\label{fourth}
    \bE(\xi_j\xi_k\xi_{\ell}\xi_m)
    =
    s_{jk} s_{\ell m}
    +
    s_{j\ell} s_{k m}
    +
    s_{jm} s_{k\ell},
\end{equation}
cf. a similar relation for the annihilation and creation operators (\ref{faa}) in \cite[Eq. (4.4.121) on p. 122]{GZ_2004}. Hence, the real parts of the fourth order moments in (\ref{xxx_xxxx}) take the form
\begin{equation}
\label{xxxxWick}
    \varphi_{jk\ell m}
    =
    \sigma_{jk} \sigma_{\ell m}
    +
    \sigma_{j\ell} \sigma_{k m}
    +
    \sigma_{jm} \sigma_{k\ell}
    -(    \theta_{jk} \theta_{\ell m}
    +
    \theta_{j\ell} \theta_{k m}
    +
    \theta_{jm} \theta_{k\ell})/4,
\end{equation}
which is in accordance with the more general relationships (\ref{phitrans1}), (\ref{phitrans2}).
However, it will be more convenient to compute the partial symmetrization $\Psi$ of $\Phi$ by applying (\ref{fourth}) to the equivalent definition of $\Psi$ in (\ref{PhiPsi}) rather than using the entrywise representation (\ref{xxxxWick}).

\begin{lemma}
\label{lem:mom}
Suppose the system is in a Gaussian quantum state. Then the operator $\Psi$, defined by (\ref{Psi})--(\ref{PhiPsi}) in terms of the tensor $\Phi$ from (\ref{xxx_xxxx}), takes the form
\begin{equation}
\label{Psigauss}
    \Psi(R)
    =
    \Sigma \bra \Sigma, R\ket
    +
    2K(R),
\end{equation}
where $K$ is a positive semi-definite self-adjoint  operator on the space $\mS_n$, defined by
\begin{equation}
\label{K}
    K(R)
    :=
    \Sigma R \Sigma + \Theta R\Theta/4,
\end{equation}
with $\Theta$ and $\Sigma$ defined by (\ref{Theta}) and (\ref{Hx_Hxx_xx}). If the quantum covariance matrix $S$ from (\ref{S}) is nonsingular, then $K\succ 0$.
\end{lemma}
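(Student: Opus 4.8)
The plan is to evaluate the quadratic form $R\mapsto\bra R,\Psi(R)\ket$ directly from the equivalent characterisation of $\Psi$ in (\ref{PhiPsi}), i.e. $\bra R,\Psi(R)\ket=\bE((\xi^{\rT}R\xi)^2)$, and then to read off the operator. First I would write $\xi^{\rT}R\xi=\sum_{j,k}r_{jk}\xi_j\xi_k$, square this, take the quantum expectation, and apply the Gaussian fourth-order moment identity (\ref{fourth}) to each summand, obtaining $\bE((\xi^{\rT}R\xi)^2)=\sum_{j,k,\ell,m}r_{jk}r_{\ell m}(s_{jk}s_{\ell m}+s_{j\ell}s_{km}+s_{jm}s_{k\ell})$. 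Writing $\overline{S}:=\Sigma-i\Theta/2$ for the complex conjugate (equivalently, the transpose) of the covariance matrix $S$ in (\ref{S}), I expect the pairing $s_{jk}s_{\ell m}$ to contribute $\big(\sum_{j,k}r_{jk}s_{jk}\big)^2=(\Tr(RS))^2=\bra\Sigma,R\ket^2$, the imaginary part disappearing since $\Tr(R\Theta)=0$ for symmetric $R$ and antisymmetric $\Theta$; and each of the two remaining pairings $s_{j\ell}s_{km}$ and $s_{jm}s_{k\ell}$ to collapse, upon summing the four indices and repeatedly using $R=R^{\rT}$ and $S^{\rT}=\overline{S}$, into the trace $\Tr(RSR\overline{S})$. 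This would yield $\bra R,\Psi(R)\ket=\bra\Sigma,R\ket^2+2\Tr(RSR\overline{S})$.

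Next I would substitute $S=\Sigma+i\Theta/2$ and $\overline{S}=\Sigma-i\Theta/2$ into $\Tr(RSR\overline{S})$ and show that the two cross terms $-(i/2)\Tr(R\Sigma R\Theta)$ and $(i/2)\Tr(R\Theta R\Sigma)$ vanish individually: transposing inside the trace and using the symmetry of $R$ and $\Sigma$, the antisymmetry of $\Theta$, and cyclicity gives $\Tr(R\Sigma R\Theta)=-\Tr(R\Sigma R\Theta)=0$, and likewise $\Tr(R\Theta R\Sigma)=0$. This leaves $\Tr(RSR\overline{S})=\Tr(R\Sigma R\Sigma)+\Tr(R\Theta R\Theta)/4=\bra R,\Sigma R\Sigma+\Theta R\Theta/4\ket=\bra R,K(R)\ket$ with $K$ as in (\ref{K}), so that $\bra R,\Psi(R)\ket=\bra\Sigma,R\ket^2+2\bra R,K(R)\ket=\bra R,\Sigma\bra\Sigma,R\ket+2K(R)\ket$ for every $R\in\mS_n$. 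Since $\Psi$ is self-adjoint on $\mS_n$ (as noted before the lemma) and the map $R\mapsto\Sigma\bra\Sigma,R\ket+2K(R)$ is also a self-adjoint operator on $\mS_n$ — it maps $\mS_n$ into itself because $\Sigma R\Sigma$ and $\Theta R\Theta$ are symmetric, and is self-adjoint by cyclicity of the trace — and a self-adjoint operator is determined by its quadratic form, this proves (\ref{Psigauss}).

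For the positivity of $K$ I would use the identity $\bra R,K(R)\ket=\Tr(RSR\overline{S})$ just established. With the unique positive semi-definite Hermitian square roots $S=(S^{1/2})^2$ and $\overline{S}=(\overline{S}^{1/2})^2$ and cyclicity of the trace (and $R^{\dagger}=R$, as $R$ is real symmetric), $\Tr(RSR\overline{S})=\Tr\big((S^{1/2}R\overline{S}^{1/2})(S^{1/2}R\overline{S}^{1/2})^{\dagger}\big)=\|S^{1/2}R\overline{S}^{1/2}\|_{\rm F}^2\geqslant 0$, which gives $K\succcurlyeq 0$. If $S$ is nonsingular, then so are $\overline{S}$, $S^{1/2}$ and $\overline{S}^{1/2}$, whence $\|S^{1/2}R\overline{S}^{1/2}\|_{\rm F}$ vanishes only for $R=0$; hence $K\succ 0$.

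The step I expect to be the main obstacle is the index bookkeeping in the first paragraph: Wick's theorem (\ref{fourth}) pairs the operators \emph{in the order in which they occur}, so the latter two pairings must be reassembled into traces carrying the conjugate $\overline{S}=S^{\rT}$ in the second covariance slot rather than $S$ itself. It is exactly this $\overline{S}$ that makes the product of the two $i\Theta/2$ pieces come out as $+\Theta R\Theta/4$ rather than $-\Theta R\Theta/4$; had one symmetrised prematurely and used $S$ in both slots, the sign of the $\Theta$-quadratic term would be wrong and neither the identification with $K$ nor the positivity claim would survive. Keeping track of this ordering — and hence of the sign of the $\Theta$-quadratic contribution — is the only genuinely delicate point, the rest being routine trace manipulations.
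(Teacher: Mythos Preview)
Your derivation of the form of $\Psi$ is correct and essentially identical to the paper's: both expand $\bE((\xi^{\rT}R\xi)^2)$ via the Gaussian fourth-moment identity, collapse the three Wick pairings into $(\Tr(RS))^2+2\Tr(RSR\overline{S})$ (the paper writes $\Tr(RS^{\rT}RS)$, which is the same thing since $S^{\rT}=\overline{S}$), and then kill the cross terms $\Tr(R\Sigma R\Theta)$ by the symmetric/antisymmetric trace trick. Your caution about the ordering in Wick's theorem is well placed and you resolve it correctly.

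Where you genuinely diverge from the paper is the positivity argument. The paper does not use $\bra R,K(R)\ket=\|S^{1/2}R\overline{S}^{1/2}\|_{\rm F}^{2}$; instead it changes variables to $\wt{R}:=\sqrt{\Sigma}R\sqrt{\Sigma}$, introduces the real antisymmetric matrix $\Xi:=\Sigma^{-1/2}\Theta\Sigma^{-1/2}/2$, observes that $S\succ 0$ is equivalent to $\Sigma\succ 0$ together with the contractivity $\|\Xi\|_{\infty}<1$, and then bounds $\bra R,K(R)\ket=\|\wt{R}\|^{2}+\bra\wt{R},\Xi\wt{R}\Xi\ket\geqslant(1-\|\Xi\|_{\infty}^{2})\|\wt{R}\|^{2}$ via Cauchy--Schwarz. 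Your route is shorter and more direct for the bare conclusion $K\succ 0$. The paper's route, however, buys two things you do not get: a quantitative coercivity bound on $K$, and --- more importantly for the sequel --- it sets up exactly the objects ($\sqrt{\Sigma}$, $\Xi$, the eigenstructure of $\Xi$) that are reused in Appendix~\ref{sec:Kinv} to compute $K^{-1}$ explicitly. So your argument is a clean alternative for the lemma in isolation, while the paper's is tuned to what comes next.
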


We prove Lemma~\ref{lem:mom} in Appendix~\ref{sec:momproof}. In a particular case, when the system observables commute with each other, that is,  $\Theta = 0$, Lemma~\ref{lem:mom} reduces to the well-known result on the second moment of a quadratic form in jointly Gaussian classical random variables \cite[Lemma~2.3 on p. 204]{M_1978}. If $\Theta \ne 0$, the noncommutative quantum nature of the system observables enters (\ref{K})  through the additional term $\Theta R\Theta/4$ which makes $K$ a special self-adjoint operator of grade two \cite[Section 7]{VP_2011a}.
The above discussion allows Theorem~\ref{th:bestquad} to be concretized for  the Gaussian quantum case as follows.

\begin{theorem}
\label{th:bestquadgauss}
Suppose the mean square deviation $Q(\alpha, \beta, R)$ in  (\ref{Q1}) is associated with a Gaussian quantum state, and the quantum covariance matrix $S$ in (\ref{S}) is nonsingular.
Then
the optimal values of $\alpha$, $\beta$, $R$ in (\ref{alphadiam})--(\ref{Rdiam}) are computed in terms of the mixed central moments (\ref{Hx_Hxx_xx}) and the associated positive definite operator $K$ from (\ref{K}) as
\begin{equation}
\label{alphadiamgauss_betadiamgauss_Rdiamgauss}
    \alpha_{\diam}
    =
    -\bra
        \Sigma,
        R_{\diam}
    \ket/2,
    \qquad
    \beta_{\diam}
    =
    \Sigma^{-1}
    \eps,
    \qquad
    R_{\diam}
    =
    K^{-1}
    (\Gamma).
\end{equation}
\end{theorem}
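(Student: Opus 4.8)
The plan is to obtain Theorem~\ref{th:bestquadgauss} as a straightforward specialization of Theorem~\ref{th:bestquad}, by evaluating the three moment objects $\cT$, $\Psi$, $G$ that appear in (\ref{alphadiam})--(\ref{Rdiam}) and (\ref{G}) under the Gaussian hypothesis. First I would invoke the fact, recorded after (\ref{Wick}), that in a Gaussian quantum state every odd-order mixed central moment of the observables $\xi_1,\ldots,\xi_n$ vanishes; in particular the third-order tensor $\cT$ from (\ref{xxx_xxxx}) is identically zero, and hence so is its adjoint $\cT^{\dagger}$. Second, I would substitute the Gaussian form of $\Psi$ furnished by Lemma~\ref{lem:mom}, namely $\Psi(R) = \Sigma\bra\Sigma,R\ket + 2K(R)$ with $K$ as in (\ref{K}), into the definition (\ref{G}) of the Schur-complement operator $G = \Psi - \Sigma\bra\Sigma,\cdot\ket - \cT^{\dagger}\Sigma^{-1}\cT$. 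Because $\cT = 0$ kills the last term and the summand $\Sigma\bra\Sigma,\cdot\ket$ of $\Psi$ cancels the middle term, this collapses to $G = 2K$.

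Third, I would check that the hypothesis of Theorem~\ref{th:bestquad} holds, i.e.\ that $\Pi$ from (\ref{Pi}) is invertible. By the generalized Schur complement criterion quoted just before Theorem~\ref{th:bestquad}, invertibility of the positive semi-definite $\Pi$ is equivalent to $\Sigma\succ 0$ together with $G\succ 0$. Nonsingularity of the positive semi-definite Hermitian matrix $S = \Sigma + i\Theta/2$ in (\ref{S}) forces $S\succ 0$; evaluating $u^{\rT}Su = u^{\rT}\Sigma u$ (using antisymmetry of $\Theta$) for real $u\ne 0$ then gives $\Sigma\succ 0$. And $K\succ 0$ is exactly the concluding assertion of Lemma~\ref{lem:mom} under the present assumption that $S$ is nonsingular, so $G = 2K\succ 0$. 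Hence Theorem~\ref{th:bestquad} is applicable.

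Finally I would feed $\cT = 0$ and $G = 2K$ into (\ref{alphadiam})--(\ref{Rdiam}): formula (\ref{betadiam}) reduces to $\beta_{\diam} = \Sigma^{-1}\eps$; formula (\ref{Rdiam}) becomes $R_{\diam} = 2(2K)^{-1}(\Gamma - \cT^{\dagger}(\Sigma^{-1}\eps)) = K^{-1}(\Gamma)$; and (\ref{alphadiam}) is unchanged, yielding (\ref{alphadiamgauss_betadiamgauss_Rdiamgauss}). I do not anticipate a genuine obstacle here: the argument is essentially bookkeeping. The only points requiring care — and the ones I would state explicitly rather than leave to the reader — are the cancellation that produces $G = 2K$ and the chain of positive-definiteness implications ($S$ nonsingular $\Rightarrow$ $\Sigma\succ 0$ and, via Lemma~\ref{lem:mom}, $K\succ 0$) that licenses the use of Theorem~\ref{th:bestquad} in the first place.
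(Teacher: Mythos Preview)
Your proposal is correct and follows essentially the same route as the paper: invoke $\cT=0$ in the Gaussian state, use Lemma~\ref{lem:mom} to reduce $G$ in (\ref{G}) to $2K$, and then specialize (\ref{alphadiam})--(\ref{Rdiam}). If anything, you are more careful than the paper in explicitly verifying the invertibility hypothesis of Theorem~\ref{th:bestquad} (deducing $\Sigma\succ 0$ from $S\succ 0$ and then $G=2K\succ 0$ via Lemma~\ref{lem:mom}), which the paper compresses into a single clause.
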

\begin{proof}
The expressions (\ref{alphadiamgauss_betadiamgauss_Rdiamgauss}) are obtained from (\ref{alphadiam})--(\ref{Rdiam}) by noting that, in the Gaussian quantum state, $\cT=0$ and, in view of Lemma~\ref{lem:mom}, the operator (\ref{G}) takes the form $G=\Psi-\Sigma\bra\Sigma, \cdot\ket = 2K$, where the invertibility of $K$ is ensured  by the assumption that $S\succ 0$.
\end{proof}

Although the operator $K$ in (\ref{K}) is completely specified by the matrices $\Sigma$ and $\Theta$, the optimal parameters  $\beta_{\diam}$ and $R_{\diam}$ in (\ref{alphadiamgauss_betadiamgauss_Rdiamgauss}) will also depend on the mean value $\bE x$ in the Gaussian reference state through the vector $\eps$ and the matrix $\Gamma$. The computation of the inverse operator $K^{-1}$, which is required for (\ref{alphadiamgauss_betadiamgauss_Rdiamgauss}), is described in Appendix~\ref{sec:Kinv} 
where it is also shown that the condition $S\succ 0$ ensures the positiveness of $K^{-1}$ with respect to the convex cone $\mS_n^+$   of real positive semi-definite symmetric matrices of order $n$ in the sense that \begin{equation}
\label{Kpos}
    K^{-1}(\mS_n^+) \subset \mS_n^+.
\end{equation}

\section{Self-consistent quantum Gaussian linearization}\label{sec:self}

Suppose the quantum system interacts with the external heat bath so that the $n$-dimensional vector $X_t$ of its observables at time $t$ is governed by a QSDE
\begin{equation}
\label{dX}
    \rd X_t
    =
    \big(
        i
        [H,X_t]
        -
        BJB^{\rT}\Theta^{-1}X_t/2
    \big)
    \rd t
    +
    B\rd W_t,
\end{equation}
where $H$ is the system Hamiltonian discussed previously, and the CCR matrix $\Theta$ of the system observables is assumed to be nonsingular. This corresponds to a bilinear coupling between the open quantum system and  the bath variables in the total Hamiltonian as quantified by a constant matrix $B \in \mR^{n\x m}$; see, for example, \cite{EB_2005,JNP_2008} for details. Also,
$W_t$ is an $m$-dimensional quantum Wiener process (with $m$ even) which represents the influence of the environment  on the system.  The entries of $W_t$ are self-adjoint operators on a  boson Fock space \cite{P_1992} with the quantum Ito table
\begin{equation}
\label{WW}
    \rd W_t
    \rd W_t^{\rT}
    =
    \Omega\rd t,
    \qquad
    \Omega
    :=
    I_m + iJ/2,
    \qquad
    J =  \bJ\ox I_{m/2},
\end{equation}
where $I_r$ denotes the identity matrix of order $r$,   the matrix $\bJ$ is given by (\ref{bJ}), and $\ox$ is the Kronecker product of matrices, so that $J$ is the CCR matrix of $W_t$  in the sense that
$
    [
        \rd W_t,
        \rd W_t^{\rT}
    ]
    =
    iJ\rd t
$.
Consider the first two moments of the system observables:
\begin{equation}
\label{MS}
    \mu_t := \bE X_t,
    \qquad
    \Sigma_t
    :=
    \Re S_t,
    \qquad
    S_t := \bE(\xi_t\xi_t^{\rT}),
    \qquad
    \xi_t := X_t - \mu_t.
\end{equation}
Although the quantum state of the system is not necessarily Gaussian,  $\mu_t$ and $\Sigma_t$ can be used to compute the parameters of the mean square optimal quadratic approximation $H_{\alpha_{\diam},\beta_{\diam}, R_{\diam}}$ of the actual Hamiltonian $H$ through Theorem~\ref{th:bestquadgauss} as if the state were Gaussian. In this Gaussian reference quantum state, the term $i[H,X_t]$ in (\ref{dX}) can be approximated as
\begin{equation}
\label{Happ}
    i[H,X_t]
    \approx
    i[H_{\alpha_{\diam},\beta_{\diam},R_{\diam}},\, X_t]
=    i[\beta_{\diam}^{\rT} \xi_t + \xi_t^{\rT}R_{\diam}\xi_t/2, \, \xi_t]
    =
    \Theta(\beta_{\diam} + R_{\diam} \xi_t).
\end{equation}
Formal substitution of (\ref{Happ}) into the right-hand side of (\ref{dX})  yields a linear approximation for this QSDE which splits into an approximate ODE for the mean $\mu_t$ and an approximate QSDE for the centered vector $\xi_t$ of system observables from (\ref{MS}):
\begin{equation}
\label{Mdot_dxi}
    \dot{\mu}_t
     \approx
    \Theta \beta_{\diam} - BJB^{\rT} \Theta^{-1} \mu_t/2,
    \qquad
    \rd \xi_t
    \approx
    A_t \xi_t\rd t + B\rd W_t.
\end{equation}
Here, the external field is assumed to be in the vacuum state \cite{P_1992}, and the matrix
\begin{equation}
\label{At}
    A_t
    :=
    \Theta R_{\diam}  - BJB^{\rT}\Theta^{-1}/2
\end{equation}
depends deterministically on time $t$ through $R_{\diam}$ which is completely specified by $\mu_t$, $\Sigma_t$ according to Theorem~\ref{th:bestquadgauss}. Upon averaging, the quantum Ito differential  $\rd (\xi_t \xi_t^{\rT}) = (\rd \xi_t) \xi_t^{\rT} + \xi_t
\rd \xi_t^{\rT} + (\rd \xi_t) \rd \xi_t^{\rT}$, combined  with (\ref{Mdot_dxi}) and (\ref{WW}) according to \cite[Proposition 25.26 on pp. 202--203]{P_1992}, leads to an approximate Lyapunov ODE $\dot{S}_t \approx A_t S_t + S_t A_t^{\rT} + B\Omega B^{\rT}$ for the quantum covariance matrix $S_t$ from (\ref{MS}). If this approximation is regarded as an exact equation
\begin{equation}
\label{Sdot}
    \dot{S}_t
    =
    A_t S_t + S_t A_t^{\rT} + B\Omega B^{\rT},
\end{equation}
its solution satisfies  $S_t= \Sigma_t +i\Theta/2 \succ 0$ for all $t\> 0$, provided $S_0:= \Sigma_0 + i\Theta/2 \succ 0$. Here, we have used the positive semi-definiteness of the matrix $\Omega$ from (\ref{WW}) and the $\succ$-monotinicity of the transition operator of the Lyapunov ODE, with the fact that $A_t$ from (\ref{At}) satisfies
\begin{equation}
\label{PR}
    A_t\Theta + \Theta A_t^{\rT} + BJB^{\rT}=0.
\end{equation}
The latter property, which is equivalent to the  preservation of the CCR matrix $\Theta$ in time, is one of the physical realizability (PR) conditions \cite[Theorem 3.4 on p. 1790]{JNP_2008} describing the dynamic equivalence of the system to an open quantum harmonic oscillator. 
%
 Now, if the first of the equations (\ref{Mdot_dxi}) is also considered as an exact ODE, then its combination  with the real part of (\ref{Sdot}) yields
\begin{equation}
\label{Mdot1_Sigmadot1}
    \dot{\mu}_t
     =
    \Theta \beta_{\diam} - BJB^{\rT} \Theta^{-1} \mu_t/2,
    \qquad
    \dot{\Sigma}_t
     =
    A_t\Sigma_t+\Sigma_t A_t^{\rT} + BB^{\rT}.
\end{equation}
In conjunction with (\ref{At}), the ODEs (\ref{Mdot1_Sigmadot1}) provide a self-consistent set of nonlinear equations for finding $\mu_t$ and $\Sigma_t$ as functions of time, with the nonlinearity coming from the dependence of $\beta_{\diam}$, $R_{\diam}$ in (\ref{alphadiamgauss_betadiamgauss_Rdiamgauss})  on $\mu_t$, $\Sigma_t$. Therefore, although the above ODEs result from an ad hoc approximation, this quantum Gaussian linearization procedure generates a faithful quantum covariance matrix for the system observables and respects the PR conditions in the sense of (\ref{PR}).
A time invariant version of the procedure is obtained by equating the right-hand sides of (\ref{Mdot1_Sigmadot1}) to zero and considering \textit{admissible} solutions $\mu$, $\Sigma$ of the corresponding  algebraic equations
\begin{equation}
\label{Mdot2_Sigmadot2}
    \Theta \beta_{\diam} - BJB^{\rT} \Theta^{-1} \mu/2
    =
    0,
    \qquad
    A\Sigma+\Sigma A^{\rT} + BB^{\rT}
    =
    0,
\end{equation}
for which the matrix
\begin{equation}
\label{Ainfty}
    A
    :=
    \Theta R_{\diam}  - BJB^{\rT}\Theta^{-1}/2,
\end{equation}
obtained from (\ref{At}),
is Hurwitz. If the matrix $B$ is of full row rank, then, in view of $\Omega \succ 0$,  the corresponding solution $S$ of the algebraic Lyapunov  equation $AS+SA^{\rT} + B\Omega B^{\rT} = 0$ as a steady-state version of (\ref{Sdot}) (assuming $A$ Hurwitz)  is nonsingular, thus ensuring the invertibility of the operator $K$, which is essential for the quantum Gaussian linearization through Lemma~\ref{lem:mom} and Theorem~\ref{th:bestquadgauss}. Since the functions  $\beta_{\diam}$ and $R_{\diam}$ can, in general, be of complicated nature, their presence makes  (\ref{Mdot2_Sigmadot2})--(\ref{Ainfty}) a system of nonlinear vector-matrix algebraic equations for which the existence/uniqueness of admissible solutions $\mu$, $\Sigma$ is a nontrivial problem. Nevertheless, the example in the next section demonstrates tractability of this problem for a class of anharmonic oscillators.

\section{Application to the quantum Duffing oscillator}\label{sec:Duffing}

Consider the quantum Duffing oscillator \cite{C_1988,PT_2006}   of unit mass on the real line with the Hamiltonian
\begin{equation}
\label{HD}
    H
    :=
    V(q)
    +
    \frac{p^2}{2},
    \qquad
    V(q)
    :=
    \frac{\omega_0^2q^2}{2} + fq^4,
    \qquad
    x
    :=
    {\small\left[\begin{array}{c}
        q\\
        p
    \end{array}
    \right]}.
\end{equation}
Here, the vector $x$ of system observables
is formed by the position and momentum operators  $q$ and $p$ from Section~\ref{sec:quad_ham}
with the CCR matrix $\Theta := \bJ$ given by (\ref{bJ}). The quadro-quartic polynomial $V(q)$ describes the potential energy, where $\omega_0$ is the harmonic frequency and the coefficient $f$ ``weights''  the quartic part, which is responsible for the anharmonicity of the oscillator.
%
%
Application of the relation $[V(q),p] = V'(q)[q,p] = iV'(q)$, which follows from  the commutator identity of \cite[Eq. (3.51) on p. 39]{M_1998}) and the CCR $[q,p]=i$, leads to
$$
    i[H,x]
    =
    {\small\left[
        \begin{array}{cc}
            p\\
            -\omega_0^2 q - 4 fq^3
        \end{array}
    \right]}
$$
containing a cubic term.
We will now demonstrate the Hamiltonian-based quantum  Gaussian linearization technique of Sections \ref{sec:Gauss} and \ref{sec:self} for the quantum Duffing oscillator.  According to (\ref{betaDuff}) and (\ref{RDuff}) of Appendix~\ref{sec:Duffcalcs}, whose derivation is based on Theorem \ref{th:bestquadgauss} and repeated use of Wick's theorem,
the parameters of the mean square optimal quadratic approximation of the Hamiltonian (\ref{HD}) in a Gaussian reference quantum state are computed as
\begin{equation}
\label{best}
    \beta_{\diam}
    =
    N\mu,
    \
    N:={\small
    \left[\begin{array}{cc}
        \omega_0^2 + 4f(\kappa^2 + 3\sigma_{11}) & 0\\
        0 & 1
    \end{array}\right]
    },
    \
    R_{\diam}
    =
    {\small
    \left[\begin{array}{cc}
        \omega_0^2 + 12f(\kappa^2 + \sigma_{11}) & 0\\
        0 & 1
    \end{array}\right]
    }.
\end{equation}
Here,  $\mu:= \bE x = {\scriptsize\left[\begin{array}{c} \kappa \\ \bE p\end{array}\right]}$, with $\kappa:= \bE q$ and $\sigma_{11} := \bE((q-\kappa)^2)$ the mean and variance of the position operator. Suppose the open quantum Duffing oscillator is governed by the QSDE (\ref{dX}), with $B\in \mR^{2\x m}$ of full row rank.   Since the matrix $\bJ$ spans the space $\mA_2$, there exists a real $\phi$ such that
\begin{equation}
\label{phi}
    BJB^{\rT}=\phi\bJ.
\end{equation}
Hence,  $BJB^{\rT}\Theta^{-1}=\phi I_2$, and the time invariant version of the quantum Gaussian linearization procedure (\ref{Mdot1_Sigmadot1}) reduces to the algebraic equations
\begin{equation}
\label{infMS}
    (\bJ N - \phi I_2/2)\mu  = 0,
    \qquad
    A\Sigma+\Sigma A^{\rT} + C =0,
\end{equation}
where
\begin{equation}
\label{C}
    C:= (c_{jk})_{1\< j,k\< 2}:= BB^{\rT}
\end{equation}
satisfies $C + i\phi\bJ/2 = B\Omega B^{\rT}\succ 0$, and the matrix $A$ from (\ref{At}) takes the form
\begin{equation}
\label{A}
    A
    :=
    \bJ R_{\diam}  - \phi I_2/2
    =
    {\small
    \left[\begin{array}{cc}
         -\phi/2&  \ 1\\
        -\omega_0^2 - 12f(\kappa^2 + \sigma_{11}) & \  -\phi/2
    \end{array}\right]
    }.
\end{equation}
If $f> 0$, then the matrix $N$ in (\ref{best}) satisfies $N\succ 0$, so that the spectrum of $\bJ N$ is purely imaginary. If, in addition, $\phi>0$, then
$\det(\bJ N - \phi I_2/2)\ne 0$ and the first of the equations (\ref{infMS}) implies that $\mu = 0$ (in particular, $\kappa = 0$) for  the steady-state mean values of the system observables. Moreover, in this case, the matrix $A$ in (\ref{A}) is Hurwitz, thus leading to an admissible solution $\Sigma:= (\sigma_{jk})_{1\< j,k\< 2}$ (with $\Sigma + i\bJ/2 \succ 0$) of the second equation in (\ref{infMS}) to be found from
\begin{equation}
\label{Ric}
    \wh{A} \Sigma + \Sigma \wh{A}^{\rT} + C - 12f\sigma_{11}
        {\small
    \left[\begin{array}{cc}
        0 & \sigma_{11} \\
        \sigma_{11} & 2\sigma_{12}
    \end{array}\right]
    }
    =0,
\end{equation}
where
\begin{equation}
\label{Ahat}
    \wh{A}
    :=
    {\small
    \left[\begin{array}{cc}
        -\phi/2 & 1 \\
        -\omega_0^2 & -\phi/2
    \end{array}\right]
    }
\end{equation}
is a constant Hurwitz matrix. Due to the term, which is quadratic in $\Sigma$,  the structure of (\ref{Ric}) resembles that of an algebraic Riccati  equation. In view of (\ref{Ahat}), the matrix algebraic equation (\ref{Ric}), whose left-hand side is a real symmetric $(2\x 2)$-matrix,  is equivalent to three equations
\begin{eqnarray}
\label{Ric1}
    2\sigma_{12} -\phi\sigma_{11} + c_{11} & = & 0,\\
\label{Ric2}
    -\omega_0^2 \sigma_{11} -\phi \sigma_{12} + \sigma_{22} +  c_{12} - 12 f \sigma_{11}^2 & = & 0,\\
\label{Ric3}
    -2\omega_0^2 \sigma_{12} -\phi\sigma_{22} + c_{22} - 24 f\sigma_{11} \sigma_{12} & = & 0
\end{eqnarray}
with three unknowns $\sigma_{11}$, $\sigma_{12}$, $\sigma_{22}$.
By expressing $\sigma_{12}$ and $\sigma_{22}$ from (\ref{Ric1}) and (\ref{Ric3}) in terms of $\sigma_{11}$ as
\begin{eqnarray}
\label{sigma12}
    \sigma_{12} & = & (\phi \sigma_{11} - c_{11})/2,\\
\nonumber
    \sigma_{22}
    & = &
    \big(c_{22}-2(\omega_0^2 + 12 f\sigma_{11}) \sigma_{12}\big)/\phi\\
\label{sigma22}
    & = &
    \big((\omega_0^2 + 12 f\sigma_{11}) (c_{11}-\phi \sigma_{11}) + c_{22}\big)/\phi,
\end{eqnarray}
and substituting the representations into (\ref{Ric2}), it follows that $\sigma_{11}$ satisfies a quadratic equation:
\begin{equation}
\label{poly}
    24 f \sigma_{11}^2
    +
    (2\omega_0^2 + \phi^2/2-12 f c_{11}/\phi ) \sigma_{11}
    -
    ((\omega_0^2+\phi^2/2) c_{11} + c_{22} + \phi c_{12})/\phi
    =0.
\end{equation}
Since the matrix (\ref{C}) satisfies $C\succ 0$, so that $c_{11}c_{22}> c_{12}^2$, then, by the arithmetic-geometric mean inequality, $ \phi^2 c_{11}/2 + c_{22} \> \phi \sqrt{2c_{11}c_{22}}> \sqrt{2}\phi |c_{12}| $, and hence, the free term of the quadratic polynomial in (\ref{poly}) satisfies $(\omega_0^2+\phi^2/2) c_{11} + c_{22} + \phi c_{12} > (\sqrt{2}-1)\phi |c_{12}|\>0$. Therefore, in view of the assumption that $f>0$, the polynomial has two real roots with opposite signs, of which the positive root $\sigma_{11}$ makes the matrix $A$ in (\ref{A}) Hurwitz and yields a unique admissible solution $\Sigma$ for (\ref{Ric}) whose other entries are computed through (\ref{sigma12}) and (\ref{sigma22}). As a numerical example, suppose the open quantum Duffing oscillator is driven by a four-dimensional quantum Wiener process, and
\begin{equation}
\label{data}
    \omega_0 := 0.9026,
    \qquad
    B := {\small
    \left[\begin{array}{rrrr}
    0.4853 &  -0.1497 &   -0.0793 &   -0.6065\\
   -0.5955 &  -0.4348 &   1.5352  & -1.3474
    \end{array}\right]
    }.
\end{equation}
Here, (\ref{phi}) is satisfied with $\phi = 0.6357$. The behavior of the entries of the matrix $\Sigma$, computed by using the self-consistent quantum Gaussian linearization procedure for a range of nonnegative values of the anharmonicity parameter $f$, are shown in Fig.~\ref{fig:sigma_11_12_22}.
\begin{figure}[htbp]
\vskip-5cm\centerline{\includegraphics[width=14cm]{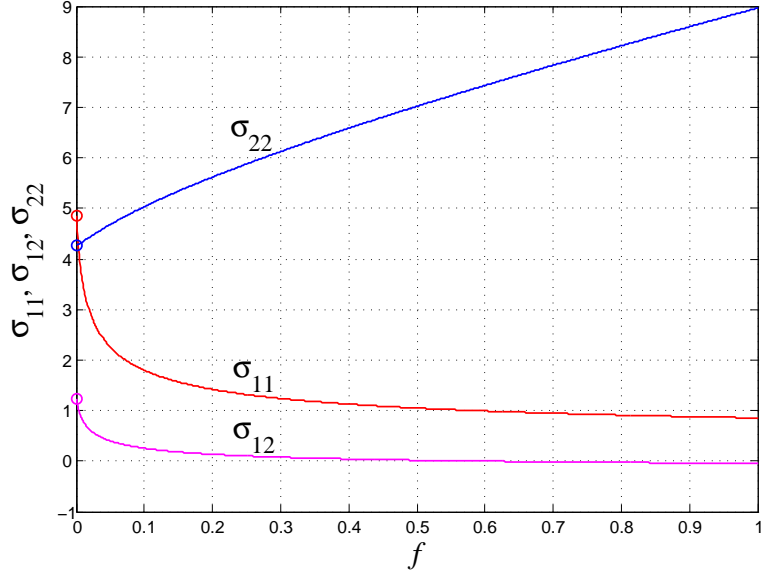}}
\vskip-5cm\caption{The entries $\sigma_{jk}$ of $\Sigma$, computed as an approximation to the real part of the steady-state quantum covariance matrix of the open quantum Duffing oscillator with (\ref{data}) via the self-consistent Gaussian linearization  for the range $0\< f\< 1$  of the anharmonicity strength  parameter in (\ref{HD}). The exact values of $\sigma_{jk}$ in the purely harmonic case $f=0$  are marked by ``$\circ$''s. }
\label{fig:sigma_11_12_22}
\end{figure}
The results of this approximation predict the decrease in the variance $\sigma_{11}$ of the position operator and the increase in the variance $\sigma_{22}$ of the  momentum operator as $f$ increases. This is in qualitative agreement with the fact that the quartic term $fq^4$ in the Hamiltonian (\ref{HD}) with large positive $f$ significantly ``steepens'' the walls of the potential well. In the case $f<0$,  when the minimum of the potential $V(q)$ at $q=0$ is only local,  the properties of the linearization are  more complicated and will be discussed elsewhere.

%
%
%

\section{Conclusion}\label{sec:conclusion}

We have proposed a quantum   Gaussian linearization  technique for a class of nonlinear open quantum systems with canonically commuting state variables governed by QSDEs with non-quadratic system Hamiltonians and bilinear coupling with the external heat bath. The approach is based on approximating the actual Hamiltonian by a quadratic function of the system observables in a mean square optimal fashion over a Gaussian reference quantum state. The optimal quadratic approximation of the Hamiltonian involves the inverse of a grade two special self-adjoint operator on real symmetric matrices, and we have described a method for its computation, more economical than that via vectorization of matrices.

The resulting differential equations for the approximations of the mean and quantum covariance matrix of the system observables  form a self-consistent set of equations which, despite their nonlinearity, produce a legitimate quantum covariance matrix. Moreover, they preserve the CCRs of the system observables, thus making the effective linear dynamics physically realizable.  We have demonstrated the approach for the quantum Duffing oscillator whose Hamiltonian is a quadro-quartic polynomial of the momentum and position operators.

A time invariant version of the proposed technique involves nonlinear vector-matrix algebraic equations for which the existence/uniqeness of admissible solutions is, in general, a nontrivial problem to be tackled elsewhere. The error analysis and detailed discussion of other properties of the quantum Gaussian linearization, and its applications to suboptimal filtering and control in nonlinear quantum systems, are also  intended for further publications.

%
%


\appendix
\section*{Appendices}

%
%

\section{Symmetries in the product moment of three observables}\label{sec:three}
\renewcommand{\theequation}{A\arabic{equation}}
\setcounter{equation}{0}

The following subsidiary lemma is used for establishing the symmetry of some of the mixed moments in (\ref{Hx_Hxx_xx}), (\ref{xxx_xxxx}) and is provided here for completeness of exposition.

\begin{lemma}
\label{lem:three}
Suppose $A$, $B$, $C$ are quantum observables on the underlying Hilbert space. Then the real part of their product  moment satisfies
\begin{equation}
\label{31}
    \Re
    \bE(ABC) =
    \Re
    \bE(CBA).
\end{equation}
Furthermore, if $A$ and $B$ (or $B$ and $C$) commute canonically, then
\begin{equation}
\label{32}
    \Re
    \bE(ABC) =
    \Re
    \bE(BAC)
\end{equation}
(respectively, $
    \Re
    \bE(ABC) =
    \Re
    \bE(ACB)
$). Finally, if each of the pairs $(A, B)$ and $(B,C)$ satisfies a CCR, then     $\Re
    \bE(ABC)$ is invariant under arbitrary permutations of $A$, $B$, $C$.
\end{lemma}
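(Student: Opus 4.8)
The plan is to reduce all three assertions to two elementary properties of the quantum expectation $\bE(\cdot)=\Tr(\rho\,\cdot)$ with respect to a density operator $\rho=\rho^{\dagger}$. The first is the conjugation rule $\overline{\bE(M)}=\bE(M^{\dagger})$, valid for any operator $M$ with well-defined expectation; explicitly,
\begin{equation}
\label{33}
    \overline{\bE(M)}
    =
    \overline{\Tr(\rho M)}
    =
    \Tr\big((\rho M)^{\dagger}\big)
    =
    \Tr(M^{\dagger}\rho)
    =
    \bE(M^{\dagger}),
\end{equation}
which uses $\overline{\Tr N}=\Tr(N^{\dagger})$, the identity $\rho^{\dagger}=\rho$, and the cyclic invariance of the trace. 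The second is that $\bE(A)\in\mR$ for every observable $A=A^{\dagger}$, which is the special case $M=A$ of $(\ref{33})$. I would state these two facts at the outset and then dispatch the three claims in turn.

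For $(\ref{31})$: since $A$, $B$, $C$ are self-adjoint, $(ABC)^{\dagger}=CBA$, so $(\ref{33})$ with $M=ABC$ gives $\overline{\bE(ABC)}=\bE(CBA)$, and equating real parts yields $\Re\bE(ABC)=\Re\bE(CBA)$. For $(\ref{32})$: if $[A,B]=i\theta\cI$ with $\theta\in\mR$, then $ABC=BAC+i\theta\,C$, hence $\bE(ABC)=\bE(BAC)+i\theta\,\bE(C)$; since $\bE(C)$ is real, the extra term is purely imaginary and disappears under $\Re$, giving $\Re\bE(ABC)=\Re\bE(BAC)$. The parenthetical case (with $B$ and $C$ commuting canonically) is identical after multiplying $BC=CB+i\theta'\cI$ on the left by $A$. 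For the final assertion, when both $(A,B)$ and $(B,C)$ satisfy CCRs: $(\ref{32})$ supplies $\Re\bE(ABC)=\Re\bE(BAC)$ and $\Re\bE(ABC)=\Re\bE(ACB)$, while $(\ref{31})$ applied to the triples $(B,C,A)$ and $(C,A,B)$ supplies $\Re\bE(BCA)=\Re\bE(ACB)$ and $\Re\bE(CAB)=\Re\bE(BAC)$; chaining these equalities shows that all six orderings of $A$, $B$, $C$ inside $\Re\bE(\cdot)$ coincide.

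The only step that needs a moment's thought — the ``obstacle'' in an otherwise routine argument — is this last one. The two neighbour-transpositions furnished by $(\ref{32})$ do not on their own generate the full symmetric group acting on the three slots: once the operators have been permuted, the new adjacent pair need not commute canonically, since nothing has been assumed about $[A,C]$. The reversal identity $(\ref{31})$ is exactly what bridges this gap, the point being that $BCA$ and $CAB$ are the slot-reversals of $ACB$ and $BAC$ respectively. I expect no genuine computation anywhere; the whole content lies in choosing which of $(\ref{31})$, $(\ref{32})$ to apply to which ordering.
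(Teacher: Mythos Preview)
Your proof is correct and follows essentially the same route as the paper's: the conjugation identity $\overline{\bE(ABC)}=\bE(CBA)$ for (\ref{31}), the purely imaginary difference $[A,B]\,\bE C$ for (\ref{32}), and the combination of reversal with the two available adjacent swaps for the final assertion. Your explicit chaining of the six orderings is in fact a bit more careful than the paper's terse appeal to ``three transpositions generate $S_3$'', since you correctly flag that the adjacent swaps from (\ref{32}) alone are insufficient without (\ref{31}) because nothing is assumed about $[A,C]$.
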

\begin{proof}
The first assertion (\ref{31}) of the lemma follows from the identities
\begin{equation}
\label{33}
    \overline{\bE(ABC)}
    =
    \overline{\Tr(\rho ABC)}
    =
    \Tr (CBA \rho)
    =
    \Tr(\rho CBA)
    =
    \bE(CBA),
\end{equation}
where $\rho$ is the density operator, and the self-adjointness of $A$, $B$, $C$ is used. To prove the second statement of the lemma, we note that if the observables $A$ and $B$ commute canonically, that is, if $[A,B]$ is a purely imaginary complex constant, then
$$
    \bE(ABC)-\bE(BAC) = \bE([A,B]C) = [A,B]\bE C
$$
is also purely imaginary by the realness of $\bE C$, thus implying   (\ref{32}).
 The other case (when $B$ and $C$ commute canonically)  is treated in a similar fashion, or alternatively, by reducing it to the  previous case through repeatedly using the invariance of $\Re \bE(ABC)$ under swapping $A$ with $C$. The third assertion of the lemma is established by combining the first two with the fact that the three transpositions $A\leftrightarrows C$, $A\leftrightarrows B$, $B\leftrightarrows C$  generate the group of all 6 possible permutations of $A$, $B$, $C$.
 \end{proof}
\section{Product moment of observables in a Gaussian quantum state}\label{sec:Wick}
\renewcommand{\theequation}{B\arabic{equation}}
\setcounter{equation}{0}

Let $\zeta:= (\zeta_j)_{1\< j \< n}$ be a vector of canonically commuting observables with a CCR matrix $\Theta \in \mA_n$, so that $[\zeta, \zeta^{\rT}] = i\Theta$. We will compute the product moment $\bE(\zeta_1\x\ldots \x \zeta_n)$ over a Gaussian quantum state in a different fashion from the traditional formulation and proof of Wick's theorem in terms of the annihilation and creation operators (\ref{faa}) and their normal ordering. To this end,
by repeatedly using the Baker-Hausdorff formula $\re^{A+B} = \re^A \re^{B-[A,B]/2}$ for operators $A$ and $B$ satisfying $[A,[A,B]] = [B,[A,B]] = 0$ (see, for example, \cite[pp. 128--129]{GZ_2004}), and the bilinearity of the commutator, it follows that
\begin{equation}
\label{multiBaker}
    \re^{i u^{\rT} \zeta}
     =
    \rprod_{k=1}^n
        \re^{iu_k \zeta_k-\left[\sum_{j=1}^{k-1} iu_j \zeta_j, iu_k \zeta_k\right]\big/2}
    =
    \re^{i\sum_{1\< j< k \< n}\theta_{jk} u_ju_k/2}
    \rprod_{k=1}^n
    \re^{iu_k \zeta_k}
\end{equation}
for any $u := (u_j)_{1\< j \< n} \in \mR^n$. Here, \
$
    \rprod_{k=1}^n
    A_k
    :=
    A_1\x \ldots  \x A_n
$ denotes
the product of operators $A_1, \ldots, A_n$, ordered  ``rightwards'', with the order of multiplication being important in the noncommutative case. If the system is in a Gaussian quantum state,  in which $\zeta$ has zero mean $\bE \zeta = 0$ and the quantum covariance matrix $S:= \bE(\zeta\zeta^{\rT})$, with  $\Im S = \Theta/2$, then the quantum characteristic function of $\zeta$ is
\begin{equation}
\label{cfzeta}
    \bE
    \re^{iu^{\rT}\zeta}
    =
    \re^{-u^{\rT} \Sigma u/2}
\end{equation}
for any $u \in \mR^n$,
where $\Sigma:= \Re S$; cf. (\ref{charfun}), (\ref{S}). By combining (\ref{cfzeta}) with (\ref{multiBaker}), it follows that
\begin{equation}
\label{Lambda}
    \Lambda(u)
     :=
    \bE
    \rprod_{k=1}^n
    \re^{iu_k \zeta_k}
    =
    \re^{-i\sum_{1\< j< k \< n}\theta_{jk} u_ju_k/2}
    \bE
    \re^{iu^{\rT} \zeta}
    =
    \re^{-u^{\rT} \wt{S} u/2}.
\end{equation}
Here,
\begin{equation}
\label{Stilde}
    \wt{S}:= (\wt{s}_{jk})_{1\< j,k\< n}:= \Sigma + i\wt{\Theta}/2
\end{equation}
is a complex symmetric matrix, where $\wt{\Theta}:= (\wt{\theta}_{jk})_{1\< j,k\< n}$ is a real symmetric matrix which is defined in terms of the CCR matrix $\Theta$ as
\begin{equation}
\label{thetatilde}
    \wt{\theta}_{kj}
    :=
    \wt{\theta}_{jk}
    :=
    \theta_{jk},
    \qquad
    1\< j\< k\< n,
\end{equation}
and use is made of the property that all the diagonal entries of $\Theta$ are zero. The product moment of the observables $\zeta_1, \ldots, \zeta_n$  in the Gaussian quantum state is obtained from the function $\Lambda$ in (\ref{Lambda}) as
\begin{equation}
\label{zetaprod}
    \bE \rprod_{k=1}^n
    \zeta_k
    =
    (-i)^n
    \d_{u_1}\ldots \d_{u_n}\Lambda(u)\big|_{u=0}.
\end{equation}
Although the matrix $\wt{S}$ in (\ref{Stilde}) is complex, its symmetry allows further analysis to be  carried out similarly to the classical Gaussian case, as if $\wt{S}$ were the covariance matrix of a vector of jointly Gaussian real-valued random variables. Indeed, since $\Lambda(-u) = \Lambda(u)$, all the odd order  partial derivatives of $\Lambda$ vanish at the origin, and  hence, the product moment $    \bE\ \rprod_{k=1}^n \zeta_k$ in (\ref{zetaprod}) can only be nonzero if $n$ is even, that is, if $n = 2r$ for some positive integer $r$. In this case, a degree $2r$ homogeneous polynomial $(-u^{\rT} \wt{S}u/2)^r/r!$ in $u_1, \ldots, u_{2r}$ is the only term of the Taylor series expansion $\Lambda(u) = \sum_{\ell\> 0} (-u^{\rT} \wt{S}u/2)^{\ell}/\ell!$ of the right-hand side of  (\ref{Lambda}) which contributes to the partial derivative in (\ref{zetaprod}). More precisely,
\begin{equation}
\label{sum}
    \d_{u_1}\ldots \d_{u_{2r}}\Lambda(u)\big|_{u=0}
     =
    \coeff_{u_1 \ldots u_{2r}}
    (-u^{\rT} \wt{S}u/2)^r/r!
    =
    \frac{(-1)^r}{r!2^r}
    \sum
    \prod_{m=1}^r
    \wt{s}_{j_mk_m},
\end{equation}
where $\coeff_{u_1 \ldots u_{2r}}(\cdot)$ denotes the coefficient of the polynomial associated with the monomial $u_1\x\ldots \x u_{2r}$, and the sum is taken over a class $\Ups_r$ of all possible $(2r)!$ permutations $(j_1, k_1, \ldots, j_r, k_r)$ of the integers $1, \ldots, 2r$. For any $(j_1, k_1, \ldots, j_r, k_r) \in \Ups_r$,  the product on the right-hand side of  (\ref{sum}) is invariant under $r!$ permutations of the pairs $(j_1, k_1)$, \ldots, $(j_r, k_r)$ and with respect to $r$  transpositions $j_1 \leftrightarrows k_1$, \ldots, $j_r \leftrightarrows k_r$, with the latter invariance following from the symmetry of the complex matrix $\wt{S}$ defined by (\ref{Stilde})--(\ref{thetatilde}). The action of a group, generated by the $r!$ pair permutations  and $r$ transpositions within each of the pairs, splits $\Ups_r$ into $(2r)!/(r!2^r) = (2r-1)!!$ equivalence classes. Each of these classes is the orbit of the group passing through one of the regular permutations $(j_1, k_1, \ldots, j_r,k_r)$ of the integers $1,\ldots, 2r$  which satisfies $j_1 < k_1$, \ldots, $j_r< k_r$ and $j_1 < j_2 <\ldots < j_{r-1} < j_r$ (see Section~\ref{sec:Gauss}). Therefore, the sum in (\ref{sum}) reduces to that over the class $\cP_r$ of regular permutations (as representatives of the equivalence classes) as
\begin{eqnarray}
\label{sum1}
    \d_{u_1}\ldots \d_{u_{2r}}\Lambda(u)\big|_{u=0}
      =
      (-1)^r
    \sum_{\cP_r}
    \prod_{m=1}^r
    \wt{s}_{j_mk_m}.
\end{eqnarray}
Now, since $\wt{s}_{jk} = s_{jk}$ for all $j\< k$ in view of (\ref{Stilde})--(\ref{thetatilde}), then $\prod_{m=1}^r
    \wt{s}_{j_mk_m} = \prod_{m=1}^r
    s_{j_mk_m}$ for any $(j_1, k_1, \ldots, j_r,k_r) \in \cP_r$. Hence, by combining (\ref{zetaprod}) (with $n=2r$) and (\ref{sum1}), it follows that
\begin{equation}
\label{zetaprod1}
    \bE \rprod_{k=1}^{2r}
    \zeta_k
    =
    \sum_{(j_1, k_1, \ldots, j_r,k_r) \in \cP_r}\,
    \prod_{m=1}^r
    s_{j_mk_m},
\end{equation}
which is what constitutes Wick's theorem. In the case $\Theta = 0$ of pairwise commuting observables, when all the cross-covariances are symmetric, that is, $s_{jk} = s_{kj}$,    the relation (\ref{zetaprod1}) reproduces Isserlis' theorem \cite{I_1918} for jointly Gaussian random variables, which, due to the symmetry,  is usually formulated in terms of pair partitions $\{\{j_1, k_1\}, \ldots, \{j_r, k_r\}\}$ of the integers $1, \ldots, 2r$. However, in the noncommutative quantum case, when $\Theta \ne 0$ and $s_{kj} = \overline{s_{jk}}$, the conditions $j_1< k_1$, \ldots, $j_r<k_r$ in the definition of regular permutations become essential for  (\ref{zetaprod1}).

\section{Proof of Lemma~\ref{lem:mom}}\label{sec:momproof}
\renewcommand{\theequation}{C\arabic{equation}}
\setcounter{equation}{0}

In the Gaussian quantum state,  the expectation on the left-hand side of (\ref{PhiPsi}) can be  computed  for any matrix $R \in \mS_n$  as
\begin{eqnarray}
\nonumber
    \bE((\xi^{\rT}R\xi)^2)
     & = &
    \sum_{j,k,\ell,m=1}^{n}
    r_{jk}r_{\ell m}
    \bE(\xi_j\xi_k\xi_{\ell} \xi_m)\\
\nonumber
     & = &
    \sum_{j,k,\ell,m=1}^{n}
        r_{jk}r_{\ell m}
    (s_{jk} s_{\ell m}
    +
    s_{j\ell} s_{k m}
    +
    s_{jm} s_{k\ell})\\
\nonumber
    & = &
    (\Tr(RS))^2
    +
    2\Tr (RS^{\rT}RS)=
    \bra
        R,
        \Sigma
    \ket^2
    +
    2\Tr (R(\Sigma R\Sigma + \Theta R \Theta/4))    \\
\label{PhiPsi1}
    & = &
    \bra
        R,
        \Sigma \bra \Sigma, R\ket
        +
        2K(R)
    \ket,
\end{eqnarray}
where we have used the symmetry of $R$ and $\Sigma$ and the antisymmetry of $\Theta$ (by which $\Tr (R\Sigma R \Theta) =0$), and also the representation (\ref{K})  of the operator $K(R):= \Re(S^{\rT} RS)$.
Comparison of the right-hand sides of (\ref{PhiPsi}) and (\ref{PhiPsi1})
leads to  (\ref{Psigauss}). We will now prove that $S\succ 0$ entails $K\succ 0$. Since the property $S\succ 0$ for the complex matrix (\ref{S}) is equivalent to the condition
$
    {\scriptsize\left[\begin{array}{cc}
        \Sigma & -\Theta/2\\
        \Theta/2 & \Sigma
    \end{array}
    \right]}
    \succ
    0
$
for the real matrices $\Sigma$ and $\Theta$, it is also equivalent to that $\Sigma \succ 0$ and that the real antisymmetric matrix \begin{equation}
\label{Xi}
    \Xi
    :=
    \Sigma^{-1/2} \Theta \Sigma^{-1/2}/2
\end{equation}
is contractive in the sense that its operator norm $\| \Xi\|_{\infty} := \sqrt{\lambda_{\max}(\Xi^{\rT} \Xi)}$ (to be distinguished from the Frobenius norm $\|\cdot\|$ of matrices, with $\lambda_{\max}(\cdot)$ the largest eigenvalue) satisfies
\begin{equation}
\label{Xismall}
    \|\Xi\|_{\infty}
    <
    1.
\end{equation}
Since $\Xi$ is antisymmetric, its operator norm coincides with the spectral radius: $\|\Xi\|_{\infty} = \br(\Xi)$.
By bijectively transforming the matrix $R$ into another real symmetric matrix
\begin{equation}
\label{Rnew}
    \wt{R}
    :=
    \sqrt{\Sigma} R \sqrt{\Sigma}
\end{equation}
(recall that $\Sigma \succ 0$, which ensures the existence of a real positive definite symmetric matrix square root $\sqrt{\Sigma}$)
and combining (\ref{K}) with (\ref{Xi}), it follows that
\begin{equation}
\label{RKR}
    \bra
        R, K(R)
    \ket
    =
    \bra
        \wt{R},
        \wt{R}
        +
        \Xi \wt{R}\Xi
    \ket=
    \|\wt{R}\|^2
    +
    \bra
        \wt{R},
        \Xi \wt{R} \Xi
    \ket
    \>
    (1-\|\Xi\|_{\infty}^2)
    \|\wt{R}\|^2.
\end{equation}
Here, we have also used the Cauchy-Bunyakovsky-Schwarz inequality (applied to the Frobenius inner product of matrices) and the property that neither $\|\Xi\wt{R}\|$ nor $\|\wt{R}\Xi\|$ exceeds $\|\Xi\|_{\infty} \|\wt{R}\|$, by which
$$
    \bra
        \wt{R},
        \Xi \wt{R} \Xi
    \ket
    =
    -
    \bra
        \Xi
        \wt{R},
         \wt{R} \Xi
    \ket
     \>
    -\|\Xi\wt{R}\| \|\wt{R}\Xi\|
    \>
    -\|\Xi\|_{\infty}^2 \|\wt{R}\|^2.
$$
Indeed,
$
    \|\Xi \wt{R}\|^2
    =
    \Tr (\wt{R} \Xi^{\rT} \Xi \wt{R})
     \<
    \lambda_{\max}(\Xi^{\rT} \Xi)
    \Tr(\wt{R}^2)
    =
    \|\Xi\|_{\infty}^2
    \|\wt{R}\|^2
$,
and the inequality $\|\wt{R} \Xi\| \< \|\Xi\|_{\infty}
    \|\wt{R}\|$ is verified in a similar fashion.
In view of (\ref{Xismall}) and arbitrariness of the matrix $R\in \mS_n$ in (\ref{Rnew}), the lower bound (\ref{RKR}) shows that the condition $S\succ 0$ does entail $K\succ 0$.

\section{Computing the inverse of the operator $K$ in  (\ref{K})}\label{sec:Kinv}
\renewcommand{\theequation}{D\arabic{equation}}
\setcounter{equation}{0}

Throughout this section, a short-hand notation
$
    \[[[
        \gamma_1, \delta_1 \mid \ldots \mid \gamma_r, \delta_r
    \]]]
    =
    \sum_{k=1}^r
    \[[[
        \gamma_k, \delta_k
    \]]]
$ will be utilized
for a \emph{special linear operator of grade} $r$ which acts on a matrix $X$ as
$$
    \[[[
        \gamma_1, \delta_1
        \mid
        \ldots
        \mid
        \gamma_r, \delta_r
    \]]](X)
    :=
    \sum_{k=1}^r
    \gamma_k X \delta_k,
$$
where $\gamma_1, \delta_1, \ldots, \gamma_r, \delta_r$ are given appropriately dimensioned real matrices.\footnote{Such operator structure resembles the Kraus form of quantum operations  \cite[pp. 360--373]{NC_2000}.} If for every $k=1,\ldots, r$, the matrices $\gamma_k$, $\delta_k$ are either both symmetric or both antisymmetric, $\[[[\gamma_1, \delta_1 \mid \ldots \mid \gamma_r, \delta_r\]]]$ is a self-adjoint operator whose properties are studied in \cite[Section 7]{VP_2011a}. Moreover, if for every $k$, the matrix $\gamma_k$ has the same order and is either symmetric or antisymmetric, then $\[[[\gamma_1,\gamma_1 \mid \ldots \mid \gamma_r,\gamma_r\]]]$ is a self-adjoint operator on an appropriate subspace of real symmetric matrices. By using the identity $\[[[\gamma,\delta\]]] \[[[\sigma,\tau\]]] = \[[[\gamma\sigma, \tau\delta \]]]$ for the  composition (interchangeably, product) of special operators of grade one, and recalling (\ref{Xi}), the operator $K$ from (\ref{K}) can be factorized as
\begin{equation}
\label{Kdecomp}
    K
    =
        \[[[
        \sqrt{\Sigma},
        \sqrt{\Sigma}
    \]]]
    \[[[
        I,I
        \mid
        \Xi, \Xi
    \]]]
    \[[[
        \sqrt{\Sigma},
        \sqrt{\Sigma}
    \]]].
\end{equation}
Here, $\[[[
        \sqrt{\Sigma},
        \sqrt{\Sigma}
    \]]] = \sqrt{\[[[
        \Sigma,
        \Sigma
    \]]]}
$
is a positive definite special  self-adjoint operator of grade one on the space $\mS_n$ which was employed in (\ref{Rnew}) to carry out the transformation $R\mapsto \wt{R}$. The latter operator is straightforwardly invertible, with
$
    \[[[
        \sqrt{\Sigma},
        \sqrt{\Sigma}
    \]]]^{-1}
    =
    \[[[
        \Sigma^{-1/2},
        \Sigma^{-1/2}
    \]]]
$
in view of $\Sigma \succ 0$. Therefore, (\ref{Kdecomp}) reduces
the computation of the inverse operator
\begin{equation}
\label{Kinv}
    K^{-1}
     =
    \[[[
        \Sigma^{-1/2},
        \Sigma^{-1/2}
    \]]]
    \[[[
        I,I
        \mid
        \Xi, \Xi
    \]]]^{-1}
    \[[[
        \Sigma^{-1/2},
        \Sigma^{-1/2}
    \]]]
\end{equation}
to that of $
\[[[
        I,I
        \mid
        \Xi, \Xi
    \]]]^{-1}
$.
Since the matrix $\Xi$, defined by  (\ref{Xi}), is real antisymmetric, it has purely imaginary spectrum $\pm i \omega_1, \ldots, \pm i\omega_{\nu}$ (with all $\omega_1, \ldots, \omega_{\nu}$ real and, without loss of generality, nonnegative) and is orthogonally block-diagonalizable in the sense that
\begin{equation}
\label{XiUU}
    \Xi
    =
    U
    (\bJ\ox \mho)
    U^{\rT}.
\end{equation}
Here, $U\in \mR^{n\x n}$ is an orthogonal matrix whose columns are formed from the real and imaginary parts of $\nu$ eigenvectors of $\Xi$ associated with the eigenvalues $i\omega_1, \ldots, i\omega_{\nu}$. Also, the matrix $\bJ$ is given by (\ref{bJ}), and
\begin{equation}
\label{mho}
    \mho
    :=
    \diag_{1\< k\< \nu}
    (\omega_k)
\end{equation}
is a diagonal matrix with $\omega_1, \ldots, \omega_{\nu}$ over the main diagonal, so that, with $\ox$ the Kronecker product of matrices,
\begin{equation}
\label{bJmho}
    \bJ
    \ox
    \mho
    =
    {\small\left[\begin{array}{cc}
        0 & \mho\\
        -\mho & 0
    \end{array}
    \right]}
\end{equation}
is a two-diagonal real antisymmetric matrix. Here, we prefer to utilize real matrices in order to avoid extensions of linear operators to spaces of complex matrices.
The orthogonality of the matrix $U$ implies that $\[[[ U, U^{\rT}\]]]$ is a unitary operator on the space $\mS_n$. Indeed, since its adjoint is $\[[[U,U^{\rT}\]]]^{\dagger} = \[[[U^{\rT}, U\]]]$, then
$$
    \[[[U,U^{\rT}\]]]^{\dagger}
    \[[[U,U^{\rT}\]]]
    =
    \[[[U^{\rT} U, U^{\rT}U\]]]
    =
    \[[[I,I\]]]
    =
    \cI
$$
is the identity operator on $\mS_n$. From (\ref{XiUU}) and the unitarity of $\[[[U,U^{\rT}\]]]$, it follows that
\begin{equation}
\label{IZ}
    \[[[
        U^{\rT},
        U
    \]]]
    \[[[
        I,I
        \mid
        \Xi, \Xi
    \]]]
    \[[[
        U,U^{\rT}
    \]]]
    =
    \cI +Z,
\end{equation}
where
\begin{equation}
\label{Z}
    Z
    :=
    \[[[
        \bJ\ox \mho,\,
        \bJ\ox \mho
    \]]]
\end{equation}
is a grade one self-adjoint operator on $\mS_n$, whose operator norm coincides with its spectral radius and is computed in terms of (\ref{mho})--(\ref{bJmho}) as
$$
    \br(Z)
    =
    \br(\bJ \ox \mho)^2
    =
    \max_{1\< k \< \nu} \omega_k^2.
$$
Therefore, since the matrix $\Xi$ is contractive in view of (\ref{Xismall}), so that $0\< \omega_1, \ldots, \omega_{\nu} <1$, then so also is the operator $Z$, thus ensuring the invertibility of the operator $\cI + Z$. Hence,  in view of the unitarity of $\[[[U^{\rT}, U\]]]$ in (\ref{IZ}), the inverse operator $K^{-1}$ in (\ref{Kinv}) can be computed as
\begin{eqnarray}
\nonumber
    K^{-1}
    & = &
    \[[[
        \Sigma^{-1/2},
        \Sigma^{-1/2}
    \]]]
    \[[[
        U,
        U^{\rT}
    \]]]
    (\cI + Z)^{-1}
    \[[[
        U^{\rT},U
    \]]]
    \[[[
        \Sigma^{-1/2},
        \Sigma^{-1/2}
    \]]]\\
\label{Kinv1}
    & = &
    \[[[
        \wt{U},
        \wt{U}^{\rT}
    \]]]
    (\cI + Z)^{-1}
    \[[[
        \wt{U}^{\rT},
        \wt{U}
    \]]].
\end{eqnarray}
Here,
\begin{equation}
\label{Utilde}
    \wt{U}
    :=
    \Sigma^{-1/2} U
\end{equation}
is a nonsingular matrix which satisfies $\Sigma^{-1}\Theta \wt{U}/2 = \wt{U}(\bJ \ox \mho)$ and is, therefore,  related to the eigenvectors of the matrix $\Sigma^{-1}\Theta/2 = \Sigma^{-1/2} \Xi \sqrt{\Sigma}$, isospectral to $\Xi$ from (\ref{Xi}). Now, to compute the inverse operator on the right-hand side of (\ref{Kinv1}), we factorize it as
\begin{equation}
\label{ZZZ}
    (\cI + Z)^{-1}
    =
    (\cI -Z^2)^{-1}
    (\cI - Z),
\end{equation}
where
\begin{equation}
\label{Z2}
    Z^2
    =
    \[[[
        \bJ^2 \ox \mho^2,
        \bJ^2 \ox \mho^2
    \]]]
    =
    \[[[
        D,D
    \]]],
\end{equation}
and
\begin{equation}
\label{D}
    D
    :=
    \diag_{1\< k \< n} (d_k)
    :=
    I_2 \ox \mho^2
    =
    {\small\left[\begin{array}{cc}
        \mho^2 & 0\\
        0 & \mho^2
    \end{array}
    \right]}
\end{equation}
is a diagonal matrix with diagonal entries $d_k = d_{k+\nu} = \omega_k^2$ for $k=1, \ldots, \nu$.
Here, we have used the property $\bJ^2=-I_2$ for the matrix (\ref{bJ}) and the power identities $\[[[\gamma,\delta\]]]^k = \[[[\gamma^k,\delta^k\]]]$ and $(\gamma \ox \delta)^k = \gamma^k \ox \delta^k$ for square matrices $\gamma$, $\delta$ and nonnegative integers $k$. It follows from (\ref{Z2})--(\ref{D}) that the inverse operator on the right-hand side of (\ref{ZZZ}) can be  expanded into an absolutely convergent operator power series as
\begin{equation}
\label{IZinv}
    (\cI - Z^2)^{-1}
    =
    \sum_{\ell\> 0}
    Z^{2\ell}
    =
    \sum_{\ell\>0}
    \[[[
        D^{\ell},D^{\ell}
    \]]].
\end{equation}
Due to the diagonal structure of the matrix $D$ from (\ref{D}), the image $(\cI - Z^2)^{-1}(Y)$ of a matrix $Y:= (y_{jk})_{1\< j,k\< n} \in \mS_n$ under the operator (\ref{IZinv}) is a real symmetric matrix with entries
$$
    ((\cI -Z^2)^{-1}(Y))_{jk}
    =
    \sum_{\ell \> 0}
    d_j^{\ell}
    y_{jk}d_k^{\ell}
    =
    y_{jk}/(1-d_jd_k)
$$
for all $1\< j,k\< n$.
Hence, upon splitting the matrix $Y:= {\scriptsize\left[\begin{array}{cc}Y_{11} & Y_{12}\\ Y_{21} & Y_{22}\end{array}\right]}$ into four $(\nu\x \nu )$-blocks $Y_{jk}$, its image takes the form
\begin{equation}
\label{I-Z2}
    (\cI -Z^2)^{-1}(Y)
     =
    {\small
    \left[\begin{array}{cc}
        L \od Y_{11} & L\od Y_{12}\\
        L\od Y_{21} & L\od Y_{22}
    \end{array}
    \right]}=
    (\bone_2 \ox L)\od  Y.
\end{equation}
Here, $\od$ denotes the Hadamard product of matrices, the matrix $L \in \mS_{\nu}$ is defined by
\begin{equation}
\label{L}
    L:= (1/(1-\omega_j^2\omega_k^2))_{1\< j,k\< \nu},
\end{equation}
and $\bone_r$ denotes the $(r\x r)$-matrix of ones. To represent the other factor  $\cI - Z$ on the right-hand side of (\ref{ZZZ}), we note that (\ref{mho}) and (\ref{bJmho}) allow the image of the matrix $Y$ under the operator $Z$ in (\ref{Z}) to be computed as
\begin{eqnarray}
\nonumber
    Z(Y)
     & = &
     {\small
     \left[\begin{array}{cc}
        0 & \mho\\
        -\mho & 0
     \end{array}
     \right]}
     Y
     {\small
     \left[\begin{array}{rc}
        0 & \mho\\
        -\mho & 0
     \end{array}
     \right]}=
    {\small
    \left[\begin{array}{cc}
        -\mho Y_{22} \mho &
         \mho Y_{21} \mho\\
         \mho Y_{12} \mho &
        -\mho Y_{11} \mho
    \end{array}
    \right]}\\
\label{ZY}
     & = &
    {\small
    \left[\begin{array}{cc}
        -M\od Y_{22} &  M\od Y_{21}\\
        M\od Y_{12} & -M\od Y_{11}
    \end{array}
    \right]}    =
    (\bone_2\ox M)
    \od
    ((\bJ\ox I_{\nu}) Y (\bJ\ox I_{\nu})),
\end{eqnarray}
where
\begin{equation}
\label{M}
    M:= (\omega_j\omega_k)_{1\< j,k\< \nu}
    =
    \omega\omega^{\rT},
    \qquad
    \omega
    :=
    {\small\left[\begin{array}{c}
        \omega_1\\
        \vdots\\
        \omega_{\nu}
    \end{array}
    \right]}.
\end{equation}
Finally, by assembling (\ref{Kinv1}), (\ref{ZZZ}), (\ref{I-Z2}), (\ref{ZY})  together, it follows that the inverse operator  $K^{-1}$ can be computed for the matrix $\Gamma\in \mS_n$ in (\ref{alphadiamgauss_betadiamgauss_Rdiamgauss})    as
\begin{equation}
\label{KinvGamma}
    K^{-1}(\Gamma)
     =
    \wt{U}
    (
        (\bone_2\ox L)
        \od
        (
            \wt{\Gamma}
            -
            (\bone_2\ox M)
            \od
            (
            (\bJ\ox I_{\nu}) \wt{\Gamma} (\bJ\ox I_{\nu})
            )
        )
    )\wt{U}^{\rT},
\end{equation}
with
        $\wt{\Gamma}
    :=
    \wt{U}^{\rT}\Gamma \wt{U}$.
Here, the matrices $\wt{U}$, $L$, $M$ are related by (\ref{Utilde}), (\ref{L}), (\ref{M}) with the eigenvectors and eigenvalues of the matrix $\Xi$ from (\ref{Xi}). It now remains to note that the positiveness (\ref{Kpos}) of $K^{-1}$ follows from (\ref{KinvGamma}) which represents this operator as the composition of positive operators $\[[[\wt{U}^{\rT}, \wt{U}\]]]$, $\[[[ \bJ^{\rT}\ox I_{\nu}, \bJ\ox I_{\nu}\]]]$, $\cI + (\bone_2\ox M)\odot $, $(\bone_2\ox L)\od $ and $\[[[\wt{U}, \wt{U}^{\rT}\]]]$. Here, we have used the positive semi-definiteness of the matrices $\bone_{\nu}$, $M$ and
$$
    L
    =
    \bone_{\nu}
    +
    \sum_{k\>1}
    \underbrace{M\od \ldots \od M}_{2k\ {\rm times}}
$$
and the fact that the Kronecker and Hadamard products of positive semi-definite matrices are also positive semi-definite; see the Schur product theorem \cite[Theorem 7.5.3 on p. 458]{HJ_2007}.

\section{Quadratic approximation of the Hamiltonian for the quantum Duffing oscillator}\label{sec:Duffcalcs}
\renewcommand{\theequation}{E\arabic{equation}}
\setcounter{equation}{0}

The mean square optimal quadratic approximation of the Hamiltonian $H$ in (\ref{HD}) reduces to the approximation of the quartic term $q^4$ by a quadratic function of $q$ and $p$. The latter problem reduces to the quadratic approximation of the non-quadratic part of
\begin{equation}
\label{q4}
    q^4
    =
    \kappa^4
    +
    4\kappa^3 \chi
    +
    6 \kappa^2 \chi^2
    +
        \lambda,
        \qquad
    \lambda
    :=
    \underbrace{4\kappa \chi^3 + \chi^4}_{\rm non-quadratic},
\end{equation}
 in the reference quantum state, with $\chi$ and $\varpi$ the centered position and momentum operators:
\begin{equation}
\label{chi_varpi}
    \xi
    =
    {\small\left[\begin{array}{c}
        \chi\\
        \varpi
    \end{array}
    \right]},
    \qquad
    \chi:= q-\kappa,
    \qquad
    \kappa:= \bE q,
    \qquad
    \varpi
    :=
    p-\bE p,
\end{equation}
where the centered vector $\xi$ of system observables is defined in conformance with (\ref{xieta}).
We will now compute the contribution of $\lambda$ from (\ref{q4}) to the quadratic approximation of the Hamiltonian (\ref{HD}) through the vector $\eps$ and the matrix $\Gamma$ from (\ref{Hx_Hxx_xx}) in a Gaussian reference quantum state:
\begin{equation}
\label{epsD_GammaD}
    \wh{\eps}
     =
    \Re \bE
    (\wh{\eta} \xi)
    =
    4\kappa
    \Re \bE
    (\chi^3\xi),
    \qquad
    \wh{\Gamma}
     =
    \Re \bE
    (\wh{\eta} \xi\xi^{\rT})
    =
    \Re \bE
    (\chi^4\xi\xi^{\rT})
    -3\sigma_{11}^2\Sigma.
\end{equation}
Here, in accordance with (\ref{xieta}) and (\ref{chi_varpi}), and by applying Wick's theorem (see Appendix~\ref{sec:Wick}) to $\bE(\chi^4) = 3(\bE(\chi^2))^2$,
\begin{equation}
\label{hateta}
    \wh{\eta}
    :=
    \lambda - \bE\lambda
    =
    \lambda-3\sigma_{11}^2
\end{equation}
is obtained by centering the non-quadratic part $\lambda$, with $\sigma_{11} := \bE(\chi^2)$ the variance of the position operator,
and use is made of the relations $\bE(\chi^4 \xi) = 0$ and $\bE(\chi^3 \xi\xi^{\rT}) = 0$  which follow from the property that the odd order central moments vanish in a Gaussian quantum state. Throughout this section, the ``hat'' symbol marks the quantities associated with the quadratic approximation of $\lambda$, such as  $\wh{\eps}$, $\wh{\Gamma}$, $\wh{\eta}$ in (\ref{epsD_GammaD}) and (\ref{hateta}). They will be multiplied by $f$ and combined with the remaining quadratic part of the Hamiltonian in (\ref{HD}). Application of Wick's theorem to the mixed moments in (\ref{epsD_GammaD}) yields
\begin{eqnarray}
\label{chi3xi}
    \bE(\chi^3 \xi)
    & = &
    3\bE(\chi^2) \bE(\chi \xi)
    =
    3\sigma_{11}
    {\small\left[\begin{array}{c}
        \bE(\chi^2)\\
        \bE(\chi\varpi)
    \end{array}
    \right]}
    =
    3\sigma_{11}
    {\small\left[\begin{array}{c}
        \sigma_{11}\\
        s_{12}
    \end{array}
    \right]},\\
\nonumber
    \bE(\chi^4 \xi_j\xi_k)
    & = &
    3
    (\bE(\chi^2))^2 \bE(\xi_j\xi_k)
    +
    12
    \bE(\chi^2) \bE(\chi \xi_j) \bE(\chi \xi_k)\\
\label{chi2xixi}
    & = &
    3\sigma_{11}^2 s_{jk} + 12 \sigma_{11} s_{1j} s_{1k}
\end{eqnarray}
for all $1\< j,k\< 2$, where $s_{jk}$ denote the entries of the quantum covariance matrix
\begin{equation}
\label{SDuff}
    S
    =
    {\small\left[\begin{array}{cc}
        s_{11} & s_{12}\\
        s_{21} & s_{22}
    \end{array}
    \right]}
    =
    {\small\left[\begin{array}{cc}
        \bE(\chi^2) & \bE(\chi\varpi)\\
        \bE(\varpi\chi) & \bE(\varpi^2)
    \end{array}
    \right]}
    =
    {\small\left[\begin{array}{cc}
        \sigma_{11} & \sigma_{12} + i/2\\
        \sigma_{12}-i/2 & \sigma_{22}
    \end{array}
    \right]}
\end{equation}
obtained from (\ref{bJ}), (\ref{S}), with $\Sigma:= (\sigma_{jk})_{1\< j,k\< 2} \in \mS_2^+$.
A vector-matrix form of (\ref{chi2xixi}) is
$
    \bE(\chi^4 \xi\xi^{\rT})
     =
    3
    \sigma_{11}^2 S
    +
    12
    \sigma_{11}
    {\scriptsize\left[\begin{array}{c}
        s_{11}\\
        s_{12}
    \end{array}
    \right]}
    {\small\left[
    \begin{array}{cc}
        s_{11} &
        s_{12}
    \end{array}
    \right]}
$.
The matrix $S$ from (\ref{SDuff}) satisfies the condition $S\succ 0$ of Theorem~\ref{th:bestquadgauss} if and only if $\sigma_{11}>0$,  $\sigma_{22}>0$ and
\begin{equation}
\label{detSigma}
    \det\Sigma
    =
    \sigma_{11}\sigma_{22}
    -
    \sigma_{12}^2
    >
    1/4,
\end{equation}
which is stronger than $\Sigma\succ 0$.
By combining (\ref{chi3xi})--(\ref{SDuff}), it follows that (\ref{epsD_GammaD}) take the form
\begin{equation}
\label{epsD1_GammaD1}
    \wh{\eps}
     =
    12\kappa
    \sigma_{11}
    \sigma,
    \qquad
    \wh{\Gamma}
    =
    3
    \sigma_{11}
    \left(
        4\sigma\sigma^{\rT}
        -
        {\small\left[\begin{array}{cc}
            0 & 0\\
            0 & 1
        \end{array}
        \right]}
    \right),
\end{equation}
where
$    \sigma
    :=
    {\scriptsize\left[\begin{array}{c}
        \sigma_{11}\\
        \sigma_{12}
    \end{array}
    \right]}
$
denotes the first column of the symmetric matrix $\Sigma$. Since $\Sigma^{-1}\sigma$ is the first column of $I_2$, substitution of (\ref{epsD1_GammaD1}) into (\ref{alphadiamgauss_betadiamgauss_Rdiamgauss}) yields
\begin{equation}
\label{betadiamDuff}
    \wh{\beta}_{\diam}
    =
    12\kappa \sigma_{11}
    \Sigma^{-1}
    \sigma
    =
    12 \kappa
    \sigma_{11}
    {\small\left[\begin{array}{c}
        1\\
        0
    \end{array}
    \right]},
\end{equation}
where $\kappa$ is the mean value of the position operator from (\ref{chi_varpi}).
We will now compute the matrix $R_{\diam}$ in (\ref{alphadiamgauss_betadiamgauss_Rdiamgauss}) by employing (\ref{KinvGamma}) of Appendix~\ref{sec:Kinv}. Since the matrix $\bJ$ from (\ref{bJ}) spans the space $\mA_2$,
the eigenvalues $\pm i \omega_1$ of the real antisymmetric $(2\x 2)$-matrix
\begin{equation}
\label{XiDuff}
    \Xi
    :=
    \Sigma^{-1/2}\bJ \Sigma^{-1/2}/2
    =
    \omega_1 U \bJ U^{\rT}
    =
    \omega_1
    \bJ
\end{equation}
from (\ref{Xi}) and (\ref{XiUU}) (with $U \in \mR^{2\x 2}$ an orthogonal matrix\footnote{The matrix $U$ is also symplectic with the structure matrix $\bJ$ in the sense that $U\bJ U^{\rT} = \bJ$.} being specified by the eigenvectors)  are given by
\begin{equation}
\label{om1}
    \omega_1
    =
    \frac{1}{2\sqrt{\det\Sigma}}
    <1,
\end{equation}
where the inequality follows (\ref{detSigma}). In the example being considered, the matrices $L$ and $M$ from (\ref{L}) and (\ref{M})  become positive scalars:
$
    L = 1/(1-\omega_1^4)
$
and
$
    M = \omega_1^2
$,
so that the action of each of the operators $(\bone_2 \ox L)\od $ and $(\bone_2 \ox M)\od $ on a $(2\x 2)$-matrix is equivalent to an appropriate scaling of the matrix. This allows (\ref{KinvGamma}), in application to (\ref{alphadiamgauss_betadiamgauss_Rdiamgauss}), to be simplified as
\begin{equation}
\label{RdiamDuff}
    \wh{R}_{\diam}
     =
     \frac{1}{1-\omega_1^4}
    \wt{U}
        (
            \wt{\Gamma}
            -
            \omega_1^2
            \bJ \wt{\Gamma} \bJ
    )\wt{U}^{\rT}
    =
     \frac{1}{1-\omega_1^4}
    \left(
        \Sigma^{-1} \wh{\Gamma} \Sigma^{-1}
        -
        \frac{\bJ \wh{\Gamma} \bJ}{(2\det\Sigma)^2}
    \right),
\end{equation}
with
\begin{equation}
\label{hatGamma}
    \Sigma^{-1} \wh{\Gamma} \Sigma^{-1}
    =
    3\sigma_{11}
    \left(
        4
        {\small\left[\begin{array}{cc}
            1 & 0\\
            0 & 0
        \end{array}
        \right]}
        -
        \Sigma^{-1}
        {\small\left[\begin{array}{cc}
            0 & 0\\
            0 & 1
        \end{array}
        \right]}
        \Sigma^{-1}
    \right)
\end{equation}
in view of (\ref{epsD1_GammaD1}).
Here, $        \wt{\Gamma}
    :=
    \wt{U}^{\rT}\wh{\Gamma} \wt{U}
$ and $\wt{U}:= \Sigma^{-1/2}U$  in accordance with (\ref{Utilde}), so that $\wt{U} \wt{U}^{\rT}=\Sigma^{-1}$ in view of the orthogonality of the matrix $U$ from (\ref{XiDuff}). Also, use has been made of the identity $Y^{-1} = -\bJ Y\bJ/\det Y$ which holds for any nonsingular symmetric matrix $Y$ of order two. Substitution of (\ref{epsD1_GammaD1}), (\ref{om1}), (\ref{hatGamma}) into (\ref{RdiamDuff}) yields
\begin{equation}
\label{RdiamDuff1}
    \wh{R}_{\diam}
    =
    12 \sigma_{11}
    {\small
    \left[
        \begin{array}{cc}
            1 & 0\\
            0 & 0
        \end{array}
    \right]
    }.
\end{equation}
In view of (\ref{betadiamDuff}) and (\ref{RdiamDuff1}), the mean square optimal quadratic approximation $\wh{\beta} \xi + \xi^{\rT}\wh{R}_{\diam} \xi/2$ of the operator $\lambda$ in (\ref{q4})  (with the additive constant terms being omitted) does not depend on the momentum operator. Now, by multiplying the quadratic approximation of $\lambda$ by $f$ and combining the result with the remaining quadratic part of the Hamiltonian (\ref{HD}), it follows that its mean square optimal quadratic approximation  in the Gaussian quantum state is
\begin{eqnarray*}
    H
    & \approx &
    \frac{p^2 + \omega_0^2q^2}{2}
    +
    f\big(
        4\kappa^3 \chi
        +
        6 \kappa^2 \chi^2
    +
    \wh{\beta}_{\diam}^{\rT}\xi
    +
    \xi^{\rT}\wh{R}_{\diam}\xi/2
    \big) + (*)\\
    & = &
    \beta_{\diam}^{\rT}\xi
    +
    \xi^{\rT}R_{\diam}\xi/2 + (*),
\end{eqnarray*}
where $(*)$ assembles the additive constant terms which are irrelevant for the dynamics of the quantum Duffing oscillator, and
\begin{eqnarray}
\label{betaDuff}
    \beta_{\diam}
    & = &
    {\small
    \left[
        \begin{array}{cc}
        \omega_0^2\kappa  + 4f \kappa^3\\
        \bE p
        \end{array}
    \right]}
    +
    f\wh{\beta}_{\diam}
    =
        {\small
    \left[
        \begin{array}{cc}
        (\omega_0^2  + 4f (\kappa^2 + 3\sigma_{11}))\kappa\\
        \bE p
        \end{array}
    \right]}, \\
\label{RDuff}
    R_{\diam}
    & = &
    {\small
    \left[
        \begin{array}{cc}
        \omega_0^2 + 12f\kappa^2 & 0\\
        0 & 1
        \end{array}
    \right]}
    +
    f\wh{R}_{\diam}
    =
    {\small
    \left[
        \begin{array}{cc}
        \omega_0^2 + 12f(\kappa^2+\sigma_{11}) & 0\\
        0 & 1
        \end{array}
    \right]}.
\end{eqnarray}

\end{document}